\theoremstyle{definition}
\newtheorem{definition}{Definition}
\newtheorem{example}{Example} 
\def\pd#1#2{\frac{\partial#1}{\partial#2}}
\newtheorem{theorem}{Theorem}
\newtheorem{proposition}[theorem]{Proposition}
\newtheorem{lemma}[theorem]{Lemma}
\def\CR{{\mathbb R}}
\def\FX{{\mathfrak X}}
\def\D{{\mathcal D}}
\def\x{\times}
\font\frak=eufm10 scaled\magstep1
\def\goth #1{\hbox{{\frak #1}}}
\def\core#1{{{#1}_*}}
\def\janusz{}
\begin{document}

\title{Geometry of Lie integrability by quadratures}

\author{
J.F. Cari\~nena$^{\dagger\,a)}$,
F. Falceto$^{\ddagger\,b)}$,
J. Grabowski$^{\diamond\,c)}$, and
M.F. Ra\~nada$^{\dagger\,d)}$  \\ \\
${}^{\dagger}$
   {\it Departamento de F\'{\i}sica Te\'orica and IUMA, Facultad de Ciencias} \\
   {\it Universidad de Zaragoza, 50009 Zaragoza, Spain}  \\
${}^{\ddagger}$
   {\it Departamento de F\'{\i}sica Te\'orica and BIFI, Facultad de Ciencias} \\
   {\it Universidad de Zaragoza, 50009 Zaragoza, Spain}  \\
${}^\diamond$
  {\it Polish Academy of Sciences, Institute of Mathematics,   }
  \\{\it  Sniadeckich 8, PO Box 21,  00-656 Warsaw, Poland }
 }
\date{\today}
\maketitle

\begin{quote}
\end{quote}

\begin{abstract}
In this paper we extend the Lie theory of integration in two different ways.
First we consider a finite dimensional Lie algebra of vector fields and
discuss the most general conditions under which
the integral curves of one of the fields can be obtained
by quadratures in a prescribed way. It turns out that the conditions
can be expressed in a purely algebraic way.
In a second step we generalize the construction to the case in which
we substitute the Lie algebra of vector fields by {\janusz  a module (generalized distribution). We obtain much larger class of integrable systems replacing standard
concepts of solvable (or nilpotent) Lie algebra with distributional solvability (nilpotency).}
\end{abstract}

\begin{quote}

{\it MSC Classification:}
{\enskip}37J35, {\enskip}34A34, {\enskip}34C15, {\enskip}70H06
\end{quote}

{\vfill}

\footnoterule
{\noindent\small
$^{a)}${\it E-mail address:} {jfc@unizar.es } \\
$^{b)}${\it E-mail address:} {falceto@unizar.es }\\
 $^{b)}${\it E-mail address:} {jagrab@impan.pl }\\
$^{c)}${\it E-mail address:} {mfran@unizar.es}    }
\newpage

 \tableofcontents
\section{Introduction}

Integrability of a given system of differential equations is a recurrent subject of very much interest and it has been an active field of research along the last years.
The meaning of integrability, however, is not well-defined and has a different sense
within each theory, and it is only rigourously defined  in each specific field.
Of course integrability means that you can find the general  solution in an algorithmic way,  but for instance you can restrict yourself to search for solutions of
 a previously selected class of functions, polynomial functions, rational functions, etc. The existence of additional structures,
for instance compatible symplectic structures,
may be useful. The possible dependence of integrability of a lucky choice of coordinates is also a relevant point.

The objective of this work is to reanalyze the classical problem of integrability by quadratures, without any recourse to the existence of additional compatible structures,
but using modern tools of algebra and geometry.
The first relevant result  is due to Lie and we aim to slightly generalize the result of his work.  Partial integrability is related to the existence of some first integrals
and infinitesimal symmetries, but in the generic case we may not have enough number of them for an effective
finding of the complete solution and for most part of physical systems we will be unable to answer about integrability or non-integrability of the system.
This gives even more importance to results about characterization of particular cases in which such a question can be answered.
This has motivated reinvention of many integration techniques that had been previously introduced by distinguished mathematicians of past centuries.
We will fix our attention on Lie approach to the problem which was based on the use of Lie algebras of symmetry vector fields, and more in particular solvable Lie algebras \cite{SL81}.
For a recent description of other related integrability approaches see e.g. \cite{MCSL}.

{\janusz  After proving a generalization of the classical result by Lie, motivated by some natural examples which do not fit into Lie's scheme, we develop new ideas in which Lie algebras
of vector fields are replaced by certain modules of vector fields (distributions). The corresponding concept of \emph{distributional integrability} allows for much larger class of examples and potential applications. Our approach and the corresponding results are, up to our knowledge, novel and  original.}

The paper is organized as follows: Section 2 is devoted to introduce notation and establishing the relation of integrability by quadratures with
 the standard Arnold-Liouville integrability. In particular we recall the classical theorem by Lie on integrability by quadratures whose proof, for the simplest case $n=2$, is given.
 In section 3 we recall some concepts of cohomology needed to analyze the existence of solutions for a system of differential equations. A lemma establishing
 in cohomological terms necessary conditions for the existence of solution of a first order system of differential equations is given.
In section  4 we introduce an iterative process for solving a system of first order differential equations, expressed in geometrical terms as a vector field $\Gamma$ on a manifold $M$.
The procedure consists in constructing by quadratures a  sequence of nested Lie subalgebras $L_{\Gamma,k}$ of a $\dim(M)$-dimensional Lie algebra of vector fields $L$, such that
any of them contain the dynamical vector field $\Gamma$. If in some step the resulting Lie algebra is Abelian,
 we can obtain with one more quadrature the general solution. Some interesting algebraic properties are studied
in Section 5, and in particular
we prove the in order for the process, outlined above, to work the
 Lie algebra must be solvable. Conversely, it is proved that when a solvable Lie algebra $L$ contains an Abelian ideal $A$,
then $(M,L,\Gamma)$ is Lie integrable, i.e. the previous algorithm works,
for any $\Gamma\in A$.
A particular case is the theorem by Lie discussed in section 2.
We also prove that  if the Lie algebra $L$ is nilpotent,
then $(M,L,\Gamma)$ is Lie integrable for any $\Gamma\in L$.

An interesting example which has been recently studied from a Hamiltonian viewpoint is reanalyzed in section 6 without any recourse
to the symplectic structure of the phase space, but
 focusing our attention on the Lie algebra structure of the symmetries. Last section is devoted to extending the previous results to
the more general situation in which, instead of having a Lie algebra $L$ of vector fields,
we have a vector space $V$ such that its elements do not close
a finite dimensional real Lie algebra, but they generate a  general integrable distribution of vector fields.
{\janusz  We develop new geometric approach to integrability based
on ceratin algebraic properties of distributions. The introduced concept of \emph{distributional integrabilty} and the corresponding version of Lie's theorem provides a large new class
of systems integrable by quadratures. We also prove in this context results similar to that of Section 5, using original ideas of \emph{distributional solvabilty} and \emph{distributional nilpotency}.}

\section{Integrability by quadratures}
 Recall that an autonomous system of differential equations,
\begin{equation}
\dot x^i=  f^i(x^1,\ldots,x^n)\ ,\qquad  i=1,\ldots,n, \label{autsyst}
\end{equation}
 is geometrically interpreted in terms of a vector field $\Gamma$ in a
$n$-dimensional manifold $M$ with
a local expression
\begin{equation}
\Gamma=\sum_{i=1}^nf^i(x^1,\ldots,x^n)\partial_i\ ,\qquad \partial_i\equiv\pd{}{x^i}\ . \label{leavf}
\end{equation}
The integral curves of $\Gamma$ are the solutions of (\ref{autsyst}). Integrating the system amounts
to determine its general solution. In particular, integrability by quadratures means that you can determine the solutions by means
of a finite number of algebraic operations and integrals of known functions.

The two main techniques for solving the system
are  the determination  of first-integrals and the search for infinitesimal symmetries of the vector field. The set of first-integrals provides  us a
foliation such that the vector field is tangent to the leaves and reducing in this way the problem to a family of lower dimensional ones, one
 in each leaf, while the knowledge of symmetries of the vector field,   suggests us to use adapted coordinates,  the system  decoupling then
  into  lower dimensional subsystems.

   More specifically, if $F_1,\ldots,F_r$, are functionally independent, i.e. such that $dF_1\wedge\cdots\wedge dF_r\ne 0$, first-integrals, then
for a given a vector field  $\Gamma\in   \mathfrak{X}(M)$, we can consider the foliation whose leaves are the level sets of the  function of rank $r$,
  $(F_1,\ldots,F_r):M\to \mathbb{R}^r$, and as $\Gamma$ is tangent to each leave, the problem is reduced
 to that of the vector fields
 $\widetilde \Gamma_c$  defined in each   $n-r$ dimensional leave
 $M_c=\mathbf{F}^{-1}(c)$, $c\in\mathbb{R}^r$. Of course, the best situation is when $r=n-1$ because then the leaves
to be considered are one-dimensional, giving us the solutions to the problem, up to a reparametrization.

The other way of reducing the problem is based on the knowledge of infinitesimal (or one-parameter subgroups of)
symmetries, i.e.
vector fields  $X$ such
that  $[X,\Gamma]=0$.  The result of the Straightening out Theorem \cite{crampinbook} asserts the existence of adapted coordinates $(y^1,\ldots,y^n)$ in a neighbourhood of a point where $X$ is different from zero, i.e. such that
$$X=\pd{}{y^n}\ ,$$
and its integral curves are obtained by solving a subsystem involving only the other $n-1$ coordinates. Note however that the new coordinates $y^1,\ldots,y^{n-1}$, are constants of motion
 and therefore we cannot find easily such coordinates in a general case.
Moreover,  the information provided by
 two different symmetry vector fields cannot be used simultaneously unless they commute.

 It is clear that the if we use such rectifying coordinates for $\Gamma$ the integration is immediate, the solution being
 $$y^k(t)=y^k_0, \quad k=1,\ldots ,n-1,\qquad y^n(t)=y^n_0+t.
 $$
 This proves that the concept of integrability by quadratures depends on the choice of initial coordinates, because using these adapted coordinates the system is always integrable by quadratures.

 Both constants of motion and infinitesimal symmetries can be used simultaneously if some compatibility conditions are satisfied.
We can say that a system admitting $r<n-1$ functionally independent constants of motion is integrable when we know furthermore $s$   infinitesimal symmetries $X_1,\ldots,X_s$, with $r+s=n$  such that
$$[X_a,X_b]=0, \quad a,b=1,\ldots, s,\qquad
\textrm{and}\qquad X_aF_\alpha=0, \quad \forall a=1,\ldots,s,\ \alpha=1,\ldots r.
$$
The constants of motion determine a $n-r$ foliation  and the former condition
means that the restriction of vector fields $X_a$ to the leaves are tangent to such leaves.

Sometimes we have additional geometric structures that are compatible with the dynamics
 \cite{Pryk99}.
For instance a $2m$-dimensional manifold $M$ is endowed with a symplectic structure $\omega$. Such 2-form relates, by contraction,  in a one-to-one way vector fields and 1-forms, and vector fields $X_F$ associated with exact 1-forms $dF$ are said to be Hamiltonian vector fields. Compatible here means that the dynamical vector field itself is a Hamiltonian vector field $X_H$. The particularly interesting case of
Arnold--Liouville  definition of (Abelian) complete integrability \cite{A78,JL53,FM78,K83} is a particular case where $r=m$,
the vector fields are $X_a=X_{F_a}$ and, for instance, $F_1=H$.
The regular Poisson bracket defined by $\omega$, i.e. $\{F_1,F_2\}=X_{F_2}F_1$, allows us to express the above tangency conditions as
$$X_{F_b}F_a=\{F_a,F_b\}=0,  \quad a,b=1,\ldots, m,
$$
i.e. the $m$ functions are constants of motion in involution and the corresponding Hamiltonian vector fields commute.

Our aim in this paper is to study integrability in absence of additional compatible structures, the main tool being the symmetries of the given vector field very much in the approach started by Lie.
We will see that if the given vector field is part of an
appropriate Lie algebra of vector fields, then it is integrable by quadratures in any chart.

Given a  vector field $\Gamma\in   \mathfrak{X}(M)$ in a differentiable
 manifold $M$, the set of strict symmetries of
$\Gamma$ is a linear space  $   \mathfrak{X}_\Gamma(M)=\{X\in   \mathfrak{X}(M)\mid [X,\Gamma]=0\}$.
Obviously, $\Gamma\in  \mathfrak{X}_\Gamma(M)$. The flow of a vector field
 $X\in   \mathfrak{X}_\Gamma$ preserves the
set of integral  curves of the dynamical vector field $\Gamma$.
There are  vector fields generating flows preserving the set of integral
curves of $\Gamma$ up to a reparametrization, those of the set
$   \mathfrak{X}^\Gamma(M)=\{X\in   \mathfrak{X}(M)\mid [X,\Gamma]=f_X\, \Gamma\}$,
where $f_X\in C^\infty (M)$. It is also a real
linear space containing $  \mathfrak{X}_\Gamma(M)$. Actually, vector fields in $  \mathfrak{X}^\Gamma(M)$
preserve the one-dimensional distribution generated by $\Gamma$.

Notice that $  \mathfrak{X}_\Gamma(M)$ is a Lie subalgebra of the real Lie algebra $  \mathfrak{X}^\Gamma(M)$, as a consequence of Jacobi identity, but is not an ideal.

The problem of integrability by quadratures depends on the determination by quadratures of the necessary first-integrals and on finding adapted coordinates, or,  in another words, finding a sufficient number of tensor invariants \cite{Koz13,B92}.
 The first result is due to Lie
who established the following theorem:

\begin{theorem}\label{Lie-thm}
If $n$ vector fields $X_1$,\ldots,$X_n$, which are linearly independent {\janusz  at each point} of  an open set  $U\subset\mathbb{R}^n$, generate a solvable Lie algebra and  are  such that $[X_1,X_i]=\lambda_i\, X_1$ with $\lambda_i\in \mathbb{R} $,
then the differential equation  $\dot x_i=X_1\;x_i$ is solvable by quadratures in $U$.
\end{theorem}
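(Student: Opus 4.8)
The plan is to argue by induction on $n$, reducing the dimension by one at each step at the cost of a single quadrature, with the base case $n=2$ (treated explicitly, as in the excerpt) as the anchor. The first thing to record is that the hypothesis $[X_1,X_i]=\lambda_i X_1$ with constant $\lambda_i$ says exactly that $A=\langle X_1\rangle$ is a one-dimensional ideal of $L=\langle X_1,\dots,X_n\rangle$; since the structure constants of a finite-dimensional Lie algebra are constant, the two formulations are equivalent, and the natural inductive hypothesis is: a solvable algebra of pointwise independent fields on an $m$-manifold containing a field $X_1$ that spans a one-dimensional ideal integrates $X_1$ by quadratures.

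First I would build a flag adapted simultaneously to the solvability of $L$ and to the ideal $A$. As $A$ is an ideal and $L$ is solvable, $L/A$ is solvable of dimension $n-1$, and a standard argument repeatedly taking preimages of hyperplanes containing the derived ideal yields a chain $A=L_1\subset L_2\subset\cdots\subset L_n=L$ with $\dim L_k=k$ and each $L_{k-1}$ an ideal of $L_k$. Choosing a constant-coefficient basis $Y_1=X_1,Y_2,\dots,Y_n$ with $L_k=\langle Y_1,\dots,Y_k\rangle$ (still pointwise independent), the ideal property forces triangular structure constants, $[Y_i,Y_j]\in L_{\max(i,j)-1}$, so the dual coframe $\eta^1,\dots,\eta^n$ with $\eta^i(Y_j)=\delta^i_j$ satisfies $d\eta^k=-\sum_{i<j}c_{ij}^k\,\eta^i\wedge\eta^j$ in which every surviving term carries a factor $\eta^j$ with $j>k$.

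Then I would read off that $\eta^n$ is closed and perform the first quadrature. Indeed $c_{ij}^n=0$ for all $i,j\le n$, so $d\eta^n=0$, and after shrinking $U$ to a contractible neighbourhood a line integral gives $g_n$ with $dg_n=\eta^n$. Since $\eta^n(Y_i)=\delta^n_i$, the function $g_n$ is a first integral of $Y_1=X_1$. Restricting to a level set $\Sigma_c=\{g_n=c\}$, the fields $Y_1,\dots,Y_{n-1}$ are tangent (they annihilate $g_n$) and restrict to pointwise independent fields on the $(n-1)$-dimensional $\Sigma_c$, closing with the same constant structure constants into the solvable subalgebra $L_{n-1}$, with $\langle X_1\rangle$ still an ideal. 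Thus the restricted system satisfies the inductive hypothesis in dimension $n-1$, and $X_1$ is integrated by quadratures on each $\Sigma_c$.

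Finally I would reassemble: the $n-1$ first integrals produced along the induction ($g_n$, together with those found on the leaves) cut out the one-dimensional integral curves, and a last quadrature $\int\eta^1=t$ along a curve supplies the time parametrization. The main obstacle is exactly this reassembly: the quadratures carried out on the leaves $\Sigma_c$ must depend smoothly on the transverse coordinate $c=g_n$ so as to glue into genuine quadratures on all of $U$, and the final transverse/time integration must itself be a quadrature rather than a nontrivial ODE. Equivalently, and this is the viewpoint to be developed in the cohomological language of Section~3, one checks that at each stage $d\eta^k\equiv 0$ modulo the already integrated forms $\eta^{k+1},\dots,\eta^n$, so that an integrating factor depending only on the known first integrals makes $\eta^k$ exact; verifying that such integrating factors exist and are obtained by quadrature, uniformly in the parameters, is the technical heart of the argument.
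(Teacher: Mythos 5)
Your argument is essentially correct, but it takes a genuinely different route from the paper's. The paper proves Theorem 1 directly only for $n=2$ (normalizing an annihilating 1-form by the integrating factor $(i(X_2)\alpha_0)^{-1}$ so that it becomes closed, hence exact by one quadrature), and obtains the general case later as the special instance $A=\langle\Gamma\rangle$ of Proposition 5: there one runs the \emph{descending} chain $L_{\Gamma,k}=\langle\Gamma\rangle+[L_{\Gamma,k-1},L_{\Gamma,k-1}]$, solves at each stage the whole family of systems $X_i Q_{\zeta}=\zeta(X_i)$ for every $\zeta$ in the annihilator $L_{\Gamma,k}^0$ with a single quadrature, and finishes with one rectifying quadrature once the chain becomes Abelian. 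You instead refine the solvable structure into an \emph{ascending} complete flag $A=L_1\subset\cdots\subset L_n=L$ with $[L_k,L_k]\subset L_{k-1}$, and induct one dimension at a time via the Maurer--Cartan identity for the dual coframe. Two remarks on this construction: first, your flag exists by exactly the argument you give (preimages in successive quotients of hyperplanes containing the derived algebra), and it is important that you took hyperplanes \emph{containing the derived algebra} --- the weaker property ``$L_{k-1}$ is an ideal of $L_k$'' alone would not yield the triangular structure constants $c_{ij}^k=0$ for $k\ge\max(i,j)$ on which the closedness of $\eta^n$ rests. Second, the two methods differ in cost: your induction uses up to $n$ quadratures, while the paper's chain may terminate much sooner because each of its quadratures integrates an entire annihilator's worth of functions and can drop several dimensions at once; e.g.\ under the theorem's hypotheses one checks that $\langle\Gamma\rangle+[L,L]$ is Abelian whenever $[L,L]$ is, so the paper's scheme then needs only two quadratures regardless of $n$.

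Concerning the point you flag as the ``technical heart'': the issue is real but routine, and it is resolved exactly as in the paper's Section 4. One defines the leafwise potentials as line integrals $\int_{\gamma_x(c)}\eta^k$ of the \emph{globally smooth} form $\eta^k$ along paths inside the leaf, choosing the reference points to vary smoothly from leaf to leaf (along a local transversal); the resulting functions are then smooth on all of $U$, not merely leaf by leaf, and the transverse gluing is automatic. Note also that your closing appeal to integrating factors ``depending only on the known first integrals'' is unnecessary in this constant-structure-constants setting: since $c_{ij}^k\neq 0$ forces $j>k$, the restriction of $\eta^k$ to each joint level set of $g_{k+1},\dots,g_n$ is already closed (every offending term of $d\eta^k$ contains a factor $\eta^j=dg_j$ which vanishes on the leaf), so leafwise exactness comes for free. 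Integrating factors are only needed in arguments like the paper's $n=2$ proof, where one starts from an arbitrary annihilating form $\alpha_0$ rather than from the dual coframe. With these two clarifications supplied, your induction closes and constitutes a complete, self-contained proof of the theorem, independent of the paper's Sections 4--5.
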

\begin{proof}
We only prove here the simplest case $n=2$ (the higher dimensional one
is a particular case of our Theorem 3).
Then, the derived algebra is one-dimensional and therefore the Lie algebra is solvable. The differential equation can be integrated if
we are able to find a first integral $F$ for $X_1$, i.e. $X_1F=0$,  such that $dF\ne 0$ in $U$.  In this case we can implicitly define one variable, for instance $x_2$,
 in  terms of the other one by $F(x_1,\phi(x_1))=k$,  and the differential equation determining the integral curves of $X_1$ is  in separate variables, i.e. integrable by quadratures.

Let $X_1$ and $X_2$  be  two vector fields such that $[X_1,X_2]=\lambda_2\, X_1$.  note that since $n=2$   there exists a 1-form $\alpha_0$,
which is defined up to multiplication by a function, such that $i(X_1)\alpha_0=0$. Obviously as $X_2$ is linearly independent of $X_1$ at each point, $i(X_2)\alpha_0\ne 0$.
The 1-form $\alpha=(i(X_2)\alpha_0)^{-1}\alpha_0$ satisfies the condition $i(X_2)\alpha=1 $, by definition, together with $i(X_1)\alpha=0$,  and we can  see that $\alpha $ is then closed, because  $X_1$ and $X_2$ generate $\mathfrak{X}(\mathbb{R}^2)$ and
$$ d\alpha(X_1,X_2)=X_1\alpha(X_2) -X_2\alpha(X_1)+\alpha([X_1,X_2])= \alpha([X_1,X_2])=\lambda_2\, \alpha(X_1)=0.
$$
Therefore, there exists, at least locally, a function $F$ such that $\alpha=dF$. In other words, the function $(i(X_2)\alpha_0)^{-1}$ is an integrating function for $\alpha_0$.  The condition $i(X_1)\alpha=0$ means that $F$ is a first integral $F$ for $X_1$.

{\janusz  The (locally defined) function $F$ such that
$$F(x^1,x^2) =\int_{\gamma_{(x^1,x^2)}}\alpha,
$$
where $\gamma_{(x^1,x^2)}$ is any curve joining a reference point $(x^1_0,x^2_0)\in U$ with the point $(x^1,x^2) $, is the function we were looking for.}
\end{proof}

We do not give the proof for $n>2$ because a more general result will be proved later on.

\section{Local integration by quadratures of closed forms}

Let us first recall some useful notions of Lie algebra cohomology. Let  $\mathfrak{g}$  be a Lie algebra and  $\mathfrak{a}$ a $\mathfrak{g}$-module respectively. In other
words, $\mathfrak{a}$ is a linear space  that is the carrier space for a linear
representation $\Psi$ of  $\mathfrak{g}$, i.e. $\Psi \colon \mathfrak{g} \to \textrm{End\,} \mathfrak{a}$ satisfies
$$\Psi (a) \Psi (b)-\Psi (b) \Psi(a)=\Psi ([a,b]),\quad \forall a,b\in \mathfrak{g}.$$

By a  $k$-cochain  we mean a $k$-linear alternating
mapping from $\mathfrak{g}\x\cdots\x\mathfrak{g}$ ($k$~times) into $\mathfrak{a}$.  We denote by
$C^k(\mathfrak{g},\mathfrak{a})$
the space of $k$-cochains. For every $k\in\mathbb{N}$ we define
$\delta_k:C^k(\mathfrak{g},\mathfrak{a})\to
C^{k+1}(\mathfrak{g},\mathfrak{a})$ by \cite{CE48,CI88}
$$\begin{array}{rcl}
(\delta_k\alpha)(a_1,\dots,a_{k+1})
 &= &{\displaystyle\sum_{i=1}^{k+1} (-1)^{i+1} \Psi(a_i)
        \alpha(a_1,\dots,\widehat a_i,\dots,a_{k+1})+ } \\
&+& {\displaystyle\sum_{i<j} (-1)^{i+j}
\alpha([a_i,a_j],a_1,\dots,\widehat
                a_i,\dots,\widehat a_j,\dots,a_{k+1})},
\end{array}
$$
where $\widehat a_i$ denotes, as usual, that the element $a_i$ is omitted.

  The  linear maps $\delta _k$ can be shown to satisfy $\delta _{k+1}\circ
\delta _k=0$.
The linear operator $\delta$ on
$C(\mathfrak{g},\mathfrak{a}) = \bigoplus_{n=0}^\infty C^k(\mathfrak{g},\mathfrak{a})$ whose restriction to
each $C^k(\mathfrak{g},\mathfrak{a})$ is $\delta_k$, satisfies $\delta^2 = 0$.  We will then
denote
$$\begin{array}{rcl}
B^k(\mathfrak{g},\mathfrak{a}) &=& \{\alpha \in C^k(\mathfrak{g},\mathfrak{a}) \mid \exists\beta\in
C^{k-1}(\mathfrak{g},\mathfrak{a}) \text{ such that }\alpha = \delta \beta \}
  = \textrm{Image\,} \delta_{k-1},  \\
Z^k(\mathfrak{g},\mathfrak{a}) &= &\{\alpha\in C^k(\mathfrak{g},\mathfrak{a}) :\mid \delta\alpha = 0\} = \ker
\delta_k.
\end{array}
$$

The elements of $Z^k(\mathfrak{g},\mathfrak{a})$ are called $k$-cocycles, and those of
$B^k(\mathfrak{g},\mathfrak{a})$ are called $k$-coboundaries. Since $\delta^2 = 0$, we have
$B^k (\mathfrak{g},\mathfrak{a})\subset Z^k(\mathfrak{g},\mathfrak{a})$. The $k$-th cohomology group $H^k(\mathfrak{g},\mathfrak{a})$ is
defined as
$$
H^k(\mathfrak{g},\mathfrak{a}) := \frac{Z^k(\mathfrak{g},\mathfrak{a})}{B^k(\mathfrak{g},\mathfrak{a})} \,,
$$
and we will define $B^0(\mathfrak{g},\mathfrak{a})=0$, by convention.

An  interesting example   is that of $\mathfrak{g}$ being a finite-dimensional Lie subalgebra of $\mathfrak{X}(M)$ (the space of vector fields on a manifold $M$), $\mathfrak{a}=\bigwedge^p(M)$ {\janusz  (the space of $p$-forms on $M$)}, and the action given by $\Psi(X)\zeta=\mathcal{L}_{X}\zeta$. For instance the case $p=0$ has been used in
\cite{COW93} in the study of weakly invariant differential equations and $p=1,2,$ play an interesting role in mechanics (see e.g. \cite{CI88}). We will use next the particular case $p=0$. In this case the elements of $Z^1({\goth g},\bigwedge^0(M)=C^\infty(M))$ are linear maps
 $h:{\goth g}\to C^\infty(M)$ satisfying
 $$
 \mathcal{L}_{X} h(Y) - \mathcal{L}_{Y} h(X) = h([X,Y])\ ,\qquad X,Y\in \mathfrak{X}(M),
 $$
and those of $B^1({\goth g},C^\infty(M))$ are those $h$ for which
 $\exists g\in C^\infty(M)$ with
 $$h(X) = \mathcal{L}_{X}g\ .$$

At this point we first present the following preliminary lemma.

\begin{lemma}\label{l1}
Let  $\{X_1,\ldots,X_n\}$ be a set of $n$ vector fields that are linearly independent {\janusz  at each point} of a  $n$-dimensional manifold $M$. Then:

1) {\janusz  The necessary and sufficient condition for the system of equations for $f\in C^\infty(M)$
\begin{equation}\label{system}
  X_i f = h_i, \qquad h_i\in C^\infty(M) ,\quad i=1,\dots,n,
\end{equation}
to have a solution is that the 1-form $\alpha\in \bigwedge ^1(M)$ such that
$\alpha(X_i)=h_i$ be  an exact 1-form}.

2)  If the previous $n$ vector fields generate a $n$-dimensional real  Lie algebra $\mathfrak{g}$, i.e. there exist real numbers $c_{ij}\,^k$ such that $[X_i,X_j]=\sum_k c_{ij}\,^k\, X_k$, then
 the necessary condition for the system of equations to have a solution is that the
 $\mathbb{R}$-linear function
 $h:\mathfrak{g}\to C^\infty(M)$  defined by $h(X_i)=h_i$ is a cochain  that is a cocycle.

\end{lemma}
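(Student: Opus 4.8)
The plan is to exploit the fact that linear independence of $X_1,\dots,X_n$ at every point makes them a frame on the $n$-dimensional manifold $M$: at each point they form a basis of the tangent space, so a $1$-form is completely determined by its values on the $X_i$, and there is a unique $1$-form $\alpha$ with $\alpha(X_i)=h_i$ (concretely $\alpha=\sum_i h_i\,\alpha^i$, where $\{\alpha^i\}$ is the dual coframe, $\alpha^i(X_j)=\delta^i_j$). This $\alpha$ is the object appearing in the statement, and everything reduces to reading off its properties on the frame.

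For part~1 I would argue by the two implications. Necessity: if $f$ solves $X_if=h_i$ for all $i$, then $df(X_i)=X_if=h_i=\alpha(X_i)$; since the $X_i$ span each tangent space, $df$ and $\alpha$ agree on a frame, whence $df=\alpha$ and $\alpha$ is exact. Sufficiency is the same computation read backwards: if $\alpha=df$ then $X_if=df(X_i)=\alpha(X_i)=h_i$, so $f$ is a solution. The only point to keep in mind is the distinction between closed and exact — a primitive $f$ exists precisely when $\alpha$ is exact, which is why the criterion is phrased in terms of exactness.

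For part~2, where the $X_i$ span a Lie algebra $\mathfrak{g}$, I would first observe that a solution $f$ exhibits $h$ as a coboundary: by $\mathbb{R}$-linearity $h(X)=\mathcal{L}_Xf$ for every $X\in\mathfrak{g}$, i.e. $h=\delta_0 f\in B^1(\mathfrak{g},C^\infty(M))$, and since $\delta_1\circ\delta_0=0$ every coboundary is a cocycle, giving $h\in Z^1(\mathfrak{g},C^\infty(M))$. More illuminatingly, the cocycle condition on $h$ is exactly the closedness of $\alpha$ evaluated on the frame:
$$d\alpha(X_i,X_j)=X_i\alpha(X_j)-X_j\alpha(X_i)-\alpha([X_i,X_j])=X_ih_j-X_jh_i-\sum_k c_{ij}\,^k h_k,$$
so $d\alpha=0$ is equivalent to $\mathcal{L}_{X_i}h(X_j)-\mathcal{L}_{X_j}h(X_i)=h([X_i,X_j])$, which is the defining identity of $Z^1(\mathfrak{g},C^\infty(M))$. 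Thus part~2 records the \emph{local} (closedness) obstruction, automatically implied by the \emph{global} (exactness) criterion of part~1.

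I do not expect a genuine obstacle: both parts rest on the frame property together with the elementary identity $[\mathcal{L}_X,\mathcal{L}_Y]=\mathcal{L}_{[X,Y]}$ acting on functions. The only thing demanding care is the logical direction in part~2 — the statement asserts necessity only, so it suffices to produce the cocycle from a hypothetical solution. The converse fails in general, since a closed $\alpha$ (a cocycle $h$) need not be exact; this gap is precisely what the cohomology $H^1(\mathfrak{g},C^\infty(M))$ measures, and it is what motivates the cohomological language introduced just above the lemma.
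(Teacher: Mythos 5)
Your proposal is correct and follows essentially the same route as the paper: part~1 is the frame-property observation that $\alpha$ and $df$ agree on the $X_i$ if and only if $\alpha=df$, and part~2 is the computation identifying $X_ih_j-X_jh_i-\sum_k c_{ij}\,^k h_k=(\delta_1 h)(X_i,X_j)=0$ as the necessary cocycle condition. Your extra remarks — that a solution exhibits $h=\delta f$ as a coboundary, hence a cocycle by $\delta^2=0$, and that the cocycle condition is exactly $d\alpha=0$ read on the frame — are consistent with observations the paper itself makes immediately after the lemma.
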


\begin{proof} 1) 
{\janusz  A function $f$ is a solution of (\ref{system}) if and only if $\alpha=d f$.

}
2) Consider $\mathfrak{a}=C^\infty(M)$ and  the cochain determined by the linear map $h$. Now the necessary condition for the existence of the solution is written as:
$$
X_i(h_j)-X_j(h_i)-\sum_kc_{ij}\,^k\, h_k=(\delta_1 h)(X_i,X_j)=0
$$
which is just the 1-cocycle condition.
\end{proof}

Most properties of differential equations are of a local character, and closed forms are
locally exact, therefore we can restrict ourselves to appropriate open subsets $U$ of $M$, i.e. open submanifolds, where every closed 1-form is exact, for instance assuming
$U$ to be arc-wise connected and simply connected. Then if $\alpha$ is closed it is locally exact, $\alpha=df$ in a certain open $U$, $f\in C^\infty(U)$,  and the solution of the system can be found  by one quadrature:
the solution  function $f$ is  given by   the quadrature
\begin{equation}\label{solution}
  f(x)=\int_{\gamma_x}\alpha,
\end{equation}
where $\gamma_x$ is any path joining some  reference point $x_0\in U$ with $x\in U$.
We also remark that
 $\alpha$ is exact, $\alpha=df$, if
and only if  $\alpha(X_i)=df(X_i)=X_if=h_i$, i.e. $h$ is a coboundary,  $h=\delta f$.

 In the particular case of the functions $h_i$ appearing in the system being constant
the condition for the existence of local solution reduces to $\alpha([X,Y])=0$, for
each pair of elements, $X$ and $Y$ in  $\mathfrak{g}$, i.e. $\alpha$ vanishes on the derived Lie
algebra $\mathfrak{g}'=[\mathfrak{g},\mathfrak{g}]$.

\section{A generalization of the Lie theory of integration.}


Let us consider a family of $n$ vector fields,  $X_1,\dots,X_n$, defined on
a $n$-dimensional manifold $M$.
We  assume that they close a Lie algebra $L$ over the real
numbers
$$
 [X_i,X_j] = \sum_kc_{ij}\,^k \,X_k \,, {\quad} i,j,k = 1,\dots,n,
$$
and that in addition  they span a basis of $T_xM$ at  every point $x\in M$.
We pick up an element in the family, let us assume it is $X_1$,
that is going to be the dynamical vector field. In order to emphasize its
special r\^ole we will often denote it by $\Gamma\equiv X_1$.

The goal, therefore, is to solve the system of equations
$\dot x_i=\Gamma\; x_i,\ i=1,\dots,n,$
or, in a coordinate independent formulation, to obtain the
integral curves $\Phi_t:M\rightarrow M$
of $\Gamma$
\begin{equation}  \label{eom}
 (\Gamma f) (\Phi_t(x)) = \frac{d}{dt} f(\Phi_t(x)),\quad \forall f\in C^\infty (x),
\end{equation}
using quadratures
(operations of integration, elimination and partial differentiation).
The number of quadratures is given by the number of
integrals of known functions depending on a finite number of parameters
that are performed.

As we mentioned above $\Gamma$ plays a distinguished and
important r\^ole since it represents the dynamics to be integrated.
In fact,  and as the approach is
concerned with the construction of a sequence of nested Lie subalgebras $L_{\Gamma,k}$ of $L$, it will be essential that  $\Gamma$
belongs to all the subalgebras.  This construction will be carried out in several steps.

 The first one will be to reduce, by one quadrature, the original problem to a similar one but with a Lie subalgebra $L_{\Gamma,1}$ of  the Lie algebra $L$
(with $\Gamma\in L_{\Gamma,1}$)
whose elements span at every point the tangent space of the
leaves of a certain foliation.
If iterating the procedure we end up with an Abelian Lie algebra
we can, with one more quadrature, obtain the flow of the dynamical
vector field (\ref{eom}).

We determine the foliation through a family of functions
that are constant on the leaves. We first consider the ideal
$$
 L_{\Gamma,1} = \langle \Gamma\rangle + [L,L] \,,{\quad} \dim L_{\Gamma,1} = n_1,
$$
that, in order to make the notation simpler, we will assume to be generated
by the first $n_1$ vector fields of the family, i.e. $L_{\Gamma,1}=\langle \Gamma,X_2,\dots,
X_{n_1}\rangle$. This can be always achieved by choosing appropriately the
basis of $L$.

Now take $\zeta_1\in L_{\Gamma,1}^0$,
where $L_{\Gamma,1}^0$ is the annihilator of $L_{\Gamma,1}$,
i.e. the set of elements in $L^*$ that kill all vectors of $L_{\Gamma,1}$.
Now we define the 1-form $\alpha_{\zeta_1}$  {\janusz  on $M$} by its action on
the vector fields in $L$ in the following way
$$\alpha_{\zeta_1}(X)=\zeta_1(X),\quad\mathrm{for}\ X\in L.$$
As $\alpha_{\zeta_1}(X)$ is a constant function on $M$, for any vector
field in $L$, we have
$$d\alpha_{\zeta_1}(X,Y)=\alpha_{\zeta_1}([X,Y])=\zeta_1([X,Y])=0,\quad\mathrm{for}\  X,Y\in L,\ \zeta_1\in L_{\Gamma,1}^0.$$
Therefore the 1-form $\alpha_{\zeta_1}$ is closed and by application of the result of the lemma \ref{l1}
the system of partial differential equations
\begin{equation}\label{system2}
 X_i Q_{\zeta_1} =\alpha_{\zeta_1}(X_i),\quad i=1,\dots,n, \quad Q_{\zeta_1}\in C^\infty(M),
\end{equation}
has a unique (up to the addition of a constant) local solution
which  can be obtained by one quadrature.

For further purposes it will be convenient that the solution
of (\ref{system2}) depends linearly on $\zeta_1$. This can be always
achieved if we follow the construction of the lemma and we fix the same
reference point $x_0$ for any $\zeta_1$. In fact,
$\alpha_{\zeta_1}$ depends linearly on $\zeta_1$ and, if $\gamma_x$ is independent of $\zeta_1$, we have that the correspondence
$$L_{\Gamma,1}^0\ni\zeta_1\mapsto Q_{\zeta_1}\in C^\infty(M),$$
defines an injective linear map.

The previous system of equations expresses the fact that the vector fields
in $L_{\Gamma,1}$ (including $\Gamma$) are tangent to
$$
  N_1^{[Y_1]}=\{x\mid Q_{\zeta_1}(x)=\zeta_1(Y_1),\,\zeta_1\in L_{\Gamma,1}^0\}\subset M
$$
for any $[Y_1]\in L/L_{\Gamma,1}$. Locally, for an open
neigbourhood $U$, the $N_1^{[Y_1]}$'s define a smooth foliation of
$n_1$-dimensional leaves.

Now, we repeat the previous procedure by taking $L_{\Gamma,1}$ as the  Lie algebra and
any leaf $N_1^{[Y_1]}$ as the manifold.
The new subalgebra $L_{\Gamma,2}\subset L_{\Gamma,1}$ is defined by
$$
 L_{\Gamma,2} = \langle \Gamma\rangle + [L_{\Gamma,1},L_{\Gamma,1}] \,,{\quad} \dim L_{\Gamma,2} = n_2\,,
$$
and taking $\zeta_2\in L_{\Gamma,2}^0\subset L_{\Gamma,1}^*$ (the annihilator of $L_{\Gamma,2}$),
we arrive at a new system of partial differential equations
$$
 X_i Q_{\zeta_2}^{[Y_1]} =\zeta_2(X_i),\quad i=1,\dots,n_1, \quad
 Q_{\zeta_2}^{[Y_1]}\in C^\infty(N_1^{[Y_1]}) \,,
$$
that can be solved with one quadrature in such a way that  $Q_{\zeta_2}^{[Y_1]}$
depends linearly on $\zeta_2$.

For later purposes, it will be useful to extend
$Q_{\zeta_2}^{[Y_1]}$ to $U$. In order to do that we
first introduce the map
$$U\ni x\mapsto [Y_1^{^x}]\in L_{\Gamma,0}/L_{\Gamma,1}\,,$$
where $x$ and $[Y_1^{^x}]$ are  related by
the equation $Q_{\zeta_1}(x)=\zeta_1(Y_1^{^x})$, that correctly
determines the map.
Now, we define $Q_{\zeta_2}\in C^\infty(U)$ by
$$Q_{\zeta_2}(x)= Q_{\zeta_2}^{[Y_1^{^x}]}(x).$$
Note that, by construction, $x\in N_1^{[Y^{^x}_1]}$
and therefore the definition makes sense.
It is also clear that  the resulting function
$Q_{\zeta_2}(x)$
is smooth provided the reference point of the lemma changes
smoothly from leave to leave; a property that can be always
fulfilled.

The construction is then iterated by defining
$$N_2^{[Y_1][Y_2]}=\{x\mid Q_{\zeta_1}(x)=\zeta_1(Y_1), \quad Q_{\zeta_2}(x)=\zeta_2(Y_2),\
{\rm with}\
\zeta_1\in L_{\Gamma,1}^0, \zeta_2\in L_{\Gamma,2}^0\}\subset M,$$
for $[Y_1]\in L_{\Gamma,0}/L_{\Gamma,1}$ and $[Y_2]\in L_{\Gamma,1}/L_{\Gamma,2}$.
Note that $L_{\Gamma,2}$ generates at every point the tangent space
of $N_2^{[Y_1][Y_2]}$, therefore we can proceed as before.

The algorithm ends if after some steps, say $k$, the Lie algebra
$L_{\Gamma,k}=\langle X_1,\dots,X_{n_k}\rangle$,
whose vector fields are tangent to the $n_k$-dimensional
leaf $N_k^{[Y_1],\dots,[Y_k]}$, is Abelian.
In this moment the system of equations
$$
  X_i Q_{\zeta_k}^{[Y_1],\dots,[Y_k]}=\zeta_k(X_i),\quad i=1,\dots,n_{k-1},\quad
Q_{\zeta_k}^{[Y_1],\dots,[Y_k]}\in C^\infty(N_k^{[Y_1],\dots,[Y_k]}),
$$
can be solved locally by one more quadrature for any  $\zeta_k\in L_{\Gamma,k}^*$
(we note that, as the final Lie algebra $L_{\Gamma,k}$ is Abelian, the integrability condition is always satisfied
and we can take $\zeta_k$ in the whole of $L_{\Gamma,k}^*$ instead of {\janusz $L_{\Gamma,k+1}^0$)}. Then, as before, we extend
the solutions to $U$ and call them $Q_{\zeta_k}$.

With all these ingredients we can find the flow of $\Gamma$ by performing
only algebraic operations. In fact, consider the formal direct sum
$$\Xi=L_{\Gamma,1}^0\oplus L_{\Gamma,2}^0\oplus\cdots\oplus L_{\Gamma,{k}}^0\oplus L_{\Gamma,k}^*$$
that, as one can check, has dimension $n$. The linear
maps $L_{\Gamma,i}^0\ni\zeta_i\mapsto Q_{\zeta_i}\in C^\infty(U)$
can be extended to $\Xi$ so that to any
$\xi\in\Xi$ we assign a
$Q_\xi\in C^\infty(U)$.
Now consider a basis
$$\{\xi_1,\dots,\xi_n\}\subset\Xi.$$
The associated functions $Q_{\xi_j},j=1,\dots,n$ are functionally independent and satisfy
\begin{equation}\label{rectifying}
 \Gamma Q_{\xi_j}(x)= \xi_j(\Gamma) \,,{\quad}  j=1,2,\dots,n,
\end{equation}
where it should be noticed that, as $\Gamma\in L_{\Gamma,l}$ for any
$l=0,\dots,k$, the right hand side is well defined.
From (\ref{rectifying}) we see that,
in the coordinates given by $Q_{\xi_j}(x),\ j=1,\dots,n$,
the vector field $\Gamma$ has constant components and, then,
it is trivially integrated
 $$Q_{\xi_j}(\Phi_t(x))=Q_{\xi_j}(x)+ \xi_j(\Gamma) t.$$
Now, with algebraic operations, one can derive the flow $\Phi_t(x)$.
Altogether we have performed $k+1$ quadratures.

\section{Algebraic properties}

The previous procedure works if it reaches an end point, i.e.
if there is a smallest non negative integer $k$ such that
$$
 L_{\Gamma,{k}}=\langle \Gamma\rangle+[L_{\Gamma,{k-1}},L_{\Gamma,{k-1}}]\,\ {\rm for}\ k>0 \,,{\qquad} L_{\Gamma,0}=L,
$$
is an Abelian algebra.  In that case we will say that $(M,L,\Gamma)$ is
{\janusz \emph{Lie integrable of order $k+1$}}.

The content of the previous section can, thus, be summarized in the following.
\begin{theorem}
If $(M,L,\Gamma)$ is Lie integrable of order $r$,
then  the integral curves of $\Gamma$ can be obtained by $r$ quadratures.
\end{theorem}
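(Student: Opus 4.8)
The plan is to recognize that this theorem is simply the summary of the iterative construction developed in Section~4, so the proof amounts to organizing that construction as an induction on the depth of the nested sequence $L_{\Gamma,0}=L\supset L_{\Gamma,1}\supset\cdots\supset L_{\Gamma,k}$ and then bookkeeping the number of quadratures. By hypothesis $(M,L,\Gamma)$ is Lie integrable of order $r=k+1$, i.e. $k$ is the smallest integer for which $L_{\Gamma,k}=\langle\Gamma\rangle+[L_{\Gamma,k-1},L_{\Gamma,k-1}]$ is Abelian. I would set up the induction so that the step at level $j$ produces exactly one quadrature together with a foliation whose leaves are tangent to the strictly smaller algebra $L_{\Gamma,j}$, and then restrict the whole problem to a single leaf before proceeding.

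The core of each inductive step is a closedness argument. Fixing $\zeta_j\in L_{\Gamma,j}^0$, the annihilator of $L_{\Gamma,j}$ inside $L_{\Gamma,j-1}^*$, I form the 1-form $\alpha_{\zeta_j}$ with $\alpha_{\zeta_j}(X)=\zeta_j(X)$ for $X\in L_{\Gamma,j-1}$. Since each $\alpha_{\zeta_j}(X)$ is a constant function, the intrinsic formula for the exterior derivative collapses to $d\alpha_{\zeta_j}(X,Y)=\zeta_j([X,Y])$, which vanishes for $X,Y\in L_{\Gamma,j-1}$ precisely because $\zeta_j$ kills $[L_{\Gamma,j-1},L_{\Gamma,j-1}]\subset L_{\Gamma,j}$; this is the one place where the defining formula for $L_{\Gamma,j}$ is essential. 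Lemma~\ref{l1} then guarantees that the system $X_i Q_{\zeta_j}=\zeta_j(X_i)$ is solvable by a single quadrature, and fixing the reference point and integration path gives a solution depending linearly and injectively on $\zeta_j$. The common level sets of the $Q_{\zeta_j}$ cut out leaves of dimension $n_j=\dim L_{\Gamma,j}$ to which $L_{\Gamma,j}$ is tangent, and since $\Gamma\in L_{\Gamma,j}$ for every $j$ by construction, $\Gamma$ remains tangent to all the leaves throughout the descent.

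After $k$ steps the algebra $L_{\Gamma,k}$ is Abelian, so the cocycle condition of Lemma~\ref{l1} holds automatically and one may take $\zeta_k$ in the whole of $L_{\Gamma,k}^*$, solving the final system with one additional quadrature. I would then assemble the formal direct sum $\Xi=L_{\Gamma,1}^0\oplus\cdots\oplus L_{\Gamma,k}^0\oplus L_{\Gamma,k}^*$ and check by a telescoping count $(n-n_1)+(n_1-n_2)+\cdots+(n_{k-1}-n_k)+n_k=n$ that $\dim\Xi=n$. Choosing a basis $\{\xi_1,\dots,\xi_n\}$, the associated functions $Q_{\xi_j}$ are functionally independent and satisfy $\Gamma Q_{\xi_j}=\xi_j(\Gamma)$, so in these coordinates $\Gamma$ has constant components and its flow is the affine map $Q_{\xi_j}(\Phi_t(x))=Q_{\xi_j}(x)+\xi_j(\Gamma)\,t$; recovering $\Phi_t$ from these relations is purely algebraic. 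Counting one quadrature per level plus the final one yields $k+1=r$ quadratures.

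The main obstacle I anticipate is not any single algebraic identity but the \emph{globalization and smoothness} of the functions $Q_{\zeta_j}$ as one passes through successively nested foliations: one must argue that the reference points and integration paths can be chosen to vary smoothly from leaf to leaf, so that the leafwise solutions $Q_{\zeta_j}^{[Y^x]}(x)$ glue into a genuinely smooth function $Q_{\zeta_j}$ on the open set $U$, and one must then verify the functional independence of the resulting $n$ functions. Handling this cleanly requires working throughout on a sufficiently small, arc-wise and simply connected $U$ so that closed forms are exact and the foliations are regular, which is exactly the local setting in which Lemma~\ref{l1} was stated.
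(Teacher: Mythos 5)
Your proposal is correct and takes essentially the same route as the paper, which presents this theorem as a summary of the Section~4 construction: one quadrature per level via the closed 1-form $\alpha_{\zeta_j}$ and Lemma~\ref{l1}, a final quadrature on the Abelian $L_{\Gamma,k}$ with $\zeta_k\in L_{\Gamma,k}^*$, the dimension-$n$ space $\Xi$ of rectifying functions $Q_{\xi_j}$ satisfying $\Gamma Q_{\xi_j}=\xi_j(\Gamma)$, and algebraic recovery of the flow, for $k+1=r$ quadratures in total. You even flag the same delicate point (smooth leaf-to-leaf choice of reference points for the quadratures) that the paper dispatches with a one-line remark, so nothing is missing.
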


We will discuss below some necessary and sufficient conditions for
the Lie integrability.

\begin{proposition}\label{p4}
If $(M,L,\Gamma)$ is Lie integrable, then $L$ is solvable.
\end{proposition}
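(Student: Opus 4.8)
The plan is to compare the sequence $L_{\Gamma,k}$ produced by the algorithm with the ordinary derived series of $L$, defined by $L^{(0)}=L$ and $L^{(j+1)}=[L^{(j)},L^{(j)}]$. Recall that $L$ is solvable precisely when $L^{(m)}=0$ for some $m$. So it suffices to locate a term of the derived series inside a term $L_{\Gamma,k}$ that the algorithm certifies to be Abelian.

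First I would establish, by induction on $k$, the inclusion
$$L^{(k)}\subseteq L_{\Gamma,k},\qquad k=0,1,2,\dots$$
The base case is immediate since $L^{(0)}=L=L_{\Gamma,0}$. For the inductive step, assume $L^{(k-1)}\subseteq L_{\Gamma,k-1}$. Bracketing is monotone in each argument, so $[L^{(k-1)},L^{(k-1)}]\subseteq[L_{\Gamma,k-1},L_{\Gamma,k-1}]$, and the defining relation $L_{\Gamma,k}=\langle\Gamma\rangle+[L_{\Gamma,k-1},L_{\Gamma,k-1}]$ only enlarges the right-hand side. Combining these gives $L^{(k)}=[L^{(k-1)},L^{(k-1)}]\subseteq L_{\Gamma,k}$, completing the induction. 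The point is simply that adjoining the one-dimensional space $\langle\Gamma\rangle$ at each stage can only make the subspaces bigger, so the derived series is dominated term by term.

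Then I would invoke the integrability hypothesis: by definition of Lie integrability there is a smallest $k$ for which $L_{\Gamma,k}$ is Abelian, i.e. $[L_{\Gamma,k},L_{\Gamma,k}]=0$. Using the inclusion just proved together with monotonicity of the bracket one more time,
$$L^{(k+1)}=[L^{(k)},L^{(k)}]\subseteq[L_{\Gamma,k},L_{\Gamma,k}]=0.$$
Hence $L^{(k+1)}=0$, so $L$ is solvable of derived length at most $k+1$, which is the claim.

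There is no serious obstacle here; the proof is essentially bookkeeping. The only point requiring care is making the comparison with the derived series precise — in particular noticing that the extra summand $\langle\Gamma\rangle$ in the definition of $L_{\Gamma,k}$, which is present to keep $\Gamma$ inside every subalgebra, does no harm because it only enlarges $L_{\Gamma,k}$ and therefore never obstructs the containment $L^{(k)}\subseteq L_{\Gamma,k}$.
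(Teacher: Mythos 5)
Your proof is correct and is essentially identical to the paper's: the same induction showing the derived series satisfies $L^{(k)}\subseteq L_{\Gamma,k}$ (the paper writes the derived series as $L_{(k)}$, reserving superscripts for the central series), followed by the same observation that Abelianness of $L_{\Gamma,k}$ forces $L^{(k+1)}=0$. No gaps; only the notation differs.
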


\begin{proof} Denote by $L_{(i)}$ the elements of the derived series,
$L_{(i+1)}=[L_{(i)},L_{(i)}]$, $L_{(0)}=L$, (note that $L_{(i)}=L_{0,i}$).
We will show by induction that
\begin{equation}
  L_{(i)}\subset L_{\Gamma,i}.
\end{equation}
Clearly, this is true for $i=0$ and, assuming that it also holds for
some $i$, we have the following
$$L_{(i+1)}=[L_{(i)},L_{(i)}]\subset [L_{\Gamma,i},L_{\Gamma,i}]\subset L_{\Gamma,{i+1}}\,,$$
that completes the induction.

Then, if the system is Lie integrable, i.e.  $L_{\Gamma,k}$ is Abelian
for some $k$, then we have $L_{(k+1)}=0$ and, therefore, $L$ is solvable.
\end{proof}

\begin{proposition}\label{p5}
If $L$ is solvable and $A$ is an Abelian ideal of $L$,
then $(M,L,\Gamma)$ is Lie integrable for any $\Gamma\in A$.
\end{proposition}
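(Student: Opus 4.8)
The plan is to pass to the quotient Lie algebra $\bar L = L/A$ and to track how the nested sequence $L_{\Gamma,k}$ projects there. Let $\pi\colon L\to\bar L$ be the canonical projection. The decisive observation is that $\Gamma\in A$, so $\pi(\Gamma)=0$: the distinguished generator $\langle\Gamma\rangle$ disappears under $\pi$. Since $A$ is an ideal, $\bar L$ is a genuine Lie algebra and $\pi$ is a Lie algebra homomorphism; moreover, as $L$ is solvable, so is its quotient $\bar L$, hence its derived series satisfies $\bar L_{(m)}=0$ for some $m$, where $\bar L_{(i+1)}=[\bar L_{(i)},\bar L_{(i)}]$ and $\bar L_{(0)}=\bar L$.

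First I would prove by induction on $k$ that $\pi(L_{\Gamma,k})=\bar L_{(k)}$. For $k=0$ this is just $\pi(L)=\bar L$. For the inductive step, using that $\pi$ is linear and bracket-preserving, so that images of subspaces add,
$$\pi(L_{\Gamma,k}) = \pi\big(\langle\Gamma\rangle + [L_{\Gamma,k-1},L_{\Gamma,k-1}]\big) = \langle\pi(\Gamma)\rangle + [\pi(L_{\Gamma,k-1}),\pi(L_{\Gamma,k-1})] = [\bar L_{(k-1)},\bar L_{(k-1)}] = \bar L_{(k)},$$
where the term $\langle\pi(\Gamma)\rangle$ drops out precisely because $\Gamma\in A$, and the last two equalities use the inductive hypothesis and the definition of the derived series of $\bar L$.

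With this identity in hand, taking $k=m$ gives $\pi(L_{\Gamma,m})=\bar L_{(m)}=0$, that is, $L_{\Gamma,m}\subseteq\ker\pi=A$. Since $A$ is Abelian, $[L_{\Gamma,m},L_{\Gamma,m}]\subseteq[A,A]=0$, so $L_{\Gamma,m}$ is an Abelian Lie algebra. Hence the algorithm of Section 4 reaches an end point at (at most) the $m$-th step, and by definition $(M,L,\Gamma)$ is Lie integrable for every $\Gamma\in A$.

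The argument is essentially forced once one decides to work in $L/A$; the only point demanding care is the bookkeeping in the inductive identity above — specifically that the span $\langle\Gamma\rangle$ and the commutator add correctly under $\pi$, and that $\pi(\Gamma)=0$ is exactly where the hypothesis $\Gamma\in A$ enters. I do not expect a genuine obstacle: solvability of $L$ is used only through solvability of the quotient $\bar L$ (to guarantee that $\bar L_{(m)}$ eventually vanishes), while Abelianness of $A$ is used only at the final step, to conclude that a subalgebra contained in $A$ must itself be Abelian.
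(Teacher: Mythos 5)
Your proof is correct and is essentially the paper's own argument recast in the quotient: your inductive identity $\pi(L_{\Gamma,k})=\bar L_{(k)}$ is precisely the paper's identity $A+L_{\Gamma,i}=A+L_{(i)}$ pushed through $\pi\colon L\to L/A$, with the ideal property of $A$ and the hypothesis $\Gamma\in A$ doing the same work in both versions. Both arguments conclude identically by using solvability to trap $L_{\Gamma,m}$ inside the Abelian ideal $A$, so there is nothing to fix.
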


\begin{proof}
Using that
$A$ is an ideal containing $\Gamma$, we can show that
$$A+L_{\Gamma,i}=A+L_{(i)}.$$
We proceed again by induction; if the previous holds, then
\begin{eqnarray}
A+L_{\Gamma,i+1}&=&A+[L_{\Gamma,i},L_{\Gamma,i}]=A+[A+L_{\Gamma,i},A+L_{\Gamma,i}]=\cr
&=&A+[A+L_{(i)},A+L_{(i)}]=A+L_{(i+1)}.
\end{eqnarray}
Now, $L$ is solvable if some $L_{(k)}=0$ and therefore $L_k\subset A$,
i.e. it is Abelian and, henceforth, the system is Lie integrable.
\end{proof}

Note that the particular case
in which $A=\langle \Gamma\rangle$
corresponds to the standard Lie theorem (Thm. {\ref{Lie-thm} of section 2).

Nilpotent algebras of vector fields \cite{MK88,JG90} also play an interesting role in the
integrability of vector fields.
\begin{proposition}
If $L$ is nilpotent, $(M,L,\Gamma)$ is Lie integrable for any $\Gamma\in L$.
\end{proposition}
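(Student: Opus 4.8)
The plan is to control the descending chain $L_{\Gamma,i}$ by the \emph{lower central series} of $L$, exploiting that nilpotency is a statement about this series rather than the derived series that governed the preceding two propositions. Write $L^1=L$ and $L^{i+1}=[L,L^i]$, so that nilpotency of $L$ means $L^{m+1}=0$ for some $m$, and recall the standard inclusions $[L^a,L^b]\subseteq L^{a+b}$. A useful preliminary observation is that the chain $L_{\Gamma,i}$ is decreasing: since each $L_{\Gamma,i}$ is a subalgebra containing $\Gamma$, we have $\langle\Gamma\rangle\subseteq L_{\Gamma,i}$ and $[L_{\Gamma,i},L_{\Gamma,i}]\subseteq L_{\Gamma,i}$, whence $L_{\Gamma,i+1}=\langle\Gamma\rangle+[L_{\Gamma,i},L_{\Gamma,i}]\subseteq L_{\Gamma,i}$.

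The central claim I would establish by induction on $i$ is
$$L_{\Gamma,i}\subseteq \langle\Gamma\rangle + L^{i+1}.$$
The base case $i=0$ is the trivial identity $L_{\Gamma,0}=L=\langle\Gamma\rangle+L^{1}$. For the inductive step, assuming $L_{\Gamma,i}\subseteq\langle\Gamma\rangle+L^{i+1}$, I would expand the bracket and bound each piece:
$$[L_{\Gamma,i},L_{\Gamma,i}]\subseteq[\langle\Gamma\rangle+L^{i+1},\langle\Gamma\rangle+L^{i+1}]\subseteq[L,L^{i+1}]+[L^{i+1},L^{i+1}]\subseteq L^{i+2}+L^{2i+2}=L^{i+2},$$
where I used $[\langle\Gamma\rangle,L^{i+1}]\subseteq[L,L^{i+1}]=L^{i+2}$ together with $[L^{i+1},L^{i+1}]\subseteq L^{2i+2}\subseteq L^{i+2}$ (since $2i+2\ge i+2$ and the series is decreasing). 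Adding $\langle\Gamma\rangle$ then yields $L_{\Gamma,i+1}\subseteq\langle\Gamma\rangle+L^{i+2}$, completing the induction.

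The conclusion is then immediate: choosing $m$ with $L^{m+1}=0$ gives $L_{\Gamma,m}\subseteq\langle\Gamma\rangle+L^{m+1}=\langle\Gamma\rangle$, and since $\Gamma\in L_{\Gamma,m}$ we in fact obtain the equality $L_{\Gamma,m}=\langle\Gamma\rangle$, a one-dimensional and hence Abelian algebra. Thus the algorithm of Section~4 reaches an end point and $(M,L,\Gamma)$ is Lie integrable for every $\Gamma\in L$.

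I do not expect a genuine technical obstacle here — the bracket computation is routine once one commits to the right filtration. The only real subtlety is \emph{conceptual}: one must recognize that the persistently reinjected summand $\langle\Gamma\rangle$ is absorbed precisely by the extra degree supplied by bracketing against all of $L$, i.e. by the passage $L^{i+1}\mapsto[L,L^{i+1}]=L^{i+2}$. This is exactly the strengthening from solvability to nilpotency, and it is what forces the chain to collapse onto $\langle\Gamma\rangle$ for \emph{every} $\Gamma\in L$ rather than only for $\Gamma$ in a distinguished Abelian ideal as in Proposition~\ref{p5}. (An equivalent route would be to note that the decreasing chain stabilizes at some subalgebra $H=\langle\Gamma\rangle+[H,H]$, observe $H$ is nilpotent as a subalgebra of $L$, and deduce $[H,H]=[H,[H,H]]$, forcing $[H,H]=0$; the lower-central-series bound above is the more transparent of the two.)
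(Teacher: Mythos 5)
Your proof is correct and takes essentially the same route as the paper: the paper also bounds the chain by the lower central series $L^{(i+1)}=[L,L^{(i)}]$, $L^{(0)}=L$, showing by induction that $L_{\Gamma,i}\subseteq\langle\Gamma\rangle+L^{(i)}$ (identical to your claim $L_{\Gamma,i}\subseteq\langle\Gamma\rangle+L^{i+1}$ up to the shift in indexing convention) and concluding $L_{\Gamma,k}=\langle\Gamma\rangle$, which is Abelian. You have merely written out in full the bracket estimate that the paper declares ``easy to see.''
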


\begin{proof} Let us consider now the central series
$L^{(i+1)}=[L,L^{(i)}]$ with $L^{(0)}=L$.
$L$ nilpotent means that there is a $k$ such that $L^{(k)}=0$.
Now, by induction, it is easy to see that
$L_{\Gamma,i}\subset \langle \Gamma\rangle +L^{(i)}$
and therefore $L_{\Gamma,k}= \langle \Gamma\rangle$.
Then, $L_{\Gamma,k}$ is Abelian and the system is Lie integrable.
\end{proof}

\section{An interesting example}

 We now analyze the particular case of a superintegrable system studied in \cite{CCR13}. In this case the system is Hamiltonian, that is, the dynamical vector field $\Gamma_H$ is obtained from a Hamitonian function
 $H$ by making use of a sympletic structure $\omega_0$ defined in the cotangent bundle $T^*Q$ ($Q$ is the configuration space). Nevertheless we are now interested in considering this system just as a dynamical
 system (without mentioning the existence of a sympletic structure) and focusing our attention on the Lie algebra structure of the symmetries.

The dynamics is given by the vector field $X_1=\Gamma$, defined in $M=\mathbb{R}^2\times\mathbb{R}^2$ with coordinates $(x,y,p_x,p_y)$, by
$$
  \Gamma = p_x\pd{}{x} + p_y\pd{}y  - \frac{k_2}{y^{ 2/3}}\pd{}{p_x}
  + \frac{2}{3} \frac{k_2\,x+k_3}{y^{ 5/3}} \pd{}{p_y} \,,
$$
where $k_2$ and $k_3$ are arbitrary constants.
Now, with $X_i$, $i=2,3,4$, we denote the vector fields
$$
\begin{array}{rcl}
X_2&=& {\displaystyle
\left(6\, p_x^2+3\, p_y^2+k_2\frac {6x}{y^{ 2/3}} + k_3 \frac {6}{y^{ 2/3}}\right)\pd{}{x}+(6\, p_xp_y+9\, k_2y^{ 1/3})\pd{}y}  \\
&-&\displaystyle{k_2\frac {6}{y^{ 2/3}}\, p_x
\pd{}{p_x}+\left(4k_2\frac {x}{y^{ 5/3}}-3\frac {1}{y^{ 2/3}}\,p_y\right)\pd{}{p_y}}   \,,
\end{array}
$$
$$
\begin{array}{rcl}
X_3&=&{\displaystyle\left(4\, p_x^3+4\, p_xp_y^2+\frac{8(k_2x+k_3)}{y^{ 2/3}}p_x+12k_2\, y^{ 1/3}\,p_y\right)
\pd{}{x}}\\
&+&\left(4p_x^2\, p_y+12k_2\, y^{ 1/3}\,p_x\right)\displaystyle{\pd{}{y}}
- {4k_2\frac {1}{y^{ 2/3}}p_x^2\,\pd{}{p_x}}  \\
&+&\left(\displaystyle{ \frac{8}{3} \frac{k_2x+k_3}{y^{5/3}} p_x^2}
- 4k_2 \frac{1}{y^{ 2/3}} p_xp_y - 12\,k_2^2 \frac{1}{y^{1/3}}\right)
\displaystyle{\pd{}{p_y} }  \,,
\end{array}
$$
and
$$\begin{array}{rcl}
X_4&=&{\displaystyle\left(6p_x^5+12\, p_x^3p_y^2+24 \frac{k_3 + k_2 x}{y^{2/3}}p_x^3+108\,k_2 y^{1/3} p_x^2 p_y  +324\, k_2^2   y^{2/3} p_x\right)\pd{}{x}}\\&+&
{\displaystyle \left(6\,  p_x^4 p_y+36\,k_2 y^{1/3} p_x^3   \right)\pd{}y-6\, \left(\frac{ k_2 }{y^{2/3}} p_x^4-972 k_2^3 \right)\pd{}{p_x}}
\cr&+&{\displaystyle \left(4\,\frac{k_3 + k_2 x}{y^{5/3}}p_x^4-12\frac{ k_2}{ y^{2/3}}-108\, k_2^2\frac 1{y^{1/3}}p_x^2 \right)\pd{}{p_y}} \,.
\end{array}
$$
Then, we have
\begin{itemize}
\item [(i)] The three vector fields $X_i$ Lie commute with $X_1=\Gamma$
$$
[\Gamma , X_i]=0 \,,{\quad} i=2,3,4.
$$
\item [(ii)] The Lie brackets of the $X_i$ between themselves are given by
$$
 [X_2, X_3]=0 \,,{\qquad}
 [X_2, X_4] = 1944\, k_2^3\,\Gamma\,,{\qquad}
 [X_3, X_4] = 432\, k_2^3 \,X_2  \,.
$$
\end{itemize}

Therefore, we have the following properties. First, $\Gamma$ and the three
vector fields $X_2,X_3,X_4$ generate a four-dimensional real Lie algebra ${L}$.
Second,  the derived algebra ${L}_{(1)}\subset {L}$  is two-dimensional and Abelian because
it is generated by $\Gamma$ and $X_2$. Finally, the second derived
algebra ${L}_{(2)}$ reduces to the trivial algebra, that is,
${L}_{(2)}=[{L}_{(1)},{L}_{(1)} ]=\{0\}$. Therefore the Lie algebra ${L}$ is
solvable. However,
${L}^{(2)}=[{L},{L}_{(1)} ]$ is not trivial but ${L}^{(2)} $ is the one-dimensional  ideal in ${L}$ generated by $\Gamma$,
and this implies that the Lie algebra is nilpotent. Consequently $(M,L,X_i ) $ is integrable for any index $ i$.

{\janusz \section{Distributional integrability}}

The previous construction is in some sense too rigid or
too restrictive.
For instance, the very simple system in $\CR^n$
with dynamical vector field
$$
 \Gamma=f(x)\partial_1
$$
which corresponds to the system of equations
$$
\dot x^1= f(x),\quad  \dot x^2=0,\quad  \dots,\quad  \dot x^n=0,
$$
can be easily solved by quadratures.
However, if one considers the natural choice
$$L=\langle \Gamma,\partial_2,\dots,\partial_n\rangle\,,$$
the vector fields do not close a Lie algebra over the real numbers.
It would be worth extending the results in the previous sections
to allow for Lie algebras over the ring of functions and accommodate this and other
interesting cases. In the following we will pursue this goal.

In order to proceed, we shall need some preliminary definitions
and results.

\begin{definition}\hfill\break
For any subset $S\subset{\FX}(M)$, we denote by $\D_S$ the
{\janusz  $C^\infty(M)$-module generated by $S$, i.e.
$$
 \D_S=\left\{\sum_i f^i X_i\in {\FX}(M) \mid f^i\in C^\infty(M),\, X_i\in S\right\}.
$$
As $\D_S$ is the module of sections of the corresponding generalized distribution, we will
also refer to $\D_S$ as to a distribution.}
\end{definition}

\begin{definition}\hfill\break
We say that a real vector space $V\subset{\FX}(M)$ is \emph{regular}
if  $V$ is isomorphic to its restriction
$V_p\subset T_pM$ at any point $p\in M$, and \emph{completely regular} if it is regular and
$V_p=T_pM$.
\end{definition}
The previous definitions immediately imply the following.
\begin{proposition}\hfill\break
1) Any subspace of a regular space is regular.\hfill\break
2) For any  two subspaces $W_1,W_2$
of a regular space we have $\D_{W_1}\cap \D_{W_2}=\D_{W_1\cap W_2}$.
\end{proposition}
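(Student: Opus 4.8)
The plan is to dispose of statement (1) immediately and to concentrate on the nontrivial inclusion in (2). For (1), recall that $V$ being regular means precisely that the evaluation map $r_p\colon V\to T_pM$, $X\mapsto X_p$, is injective (equivalently, an isomorphism onto its image $V_p$) for every $p\in M$. If $W\subset V$ is a subspace, then $r_p|_W$ is the restriction of an injective linear map and hence is itself injective for each $p$; thus $W$ is regular. For (2), the inclusion $\D_{W_1\cap W_2}\subset\D_{W_1}\cap\D_{W_2}$ is automatic, since $W_1\cap W_2\subset W_i$ forces $\D_{W_1\cap W_2}\subset\D_{W_i}$ for $i=1,2$ by monotonicity of $S\mapsto\D_S$. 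All the work lies in the reverse inclusion $\D_{W_1}\cap\D_{W_2}\subset\D_{W_1\cap W_2}$.

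The first key step is to turn the intersection of modules into a pointwise statement. I would prove the linear-algebra identity
$$(W_1\cap W_2)_p=(W_1)_p\cap(W_2)_p\qquad\text{for every }p\in M.$$
The inclusion $\subset$ holds for any spaces; for $\supset$, take $v\in(W_1)_p\cap(W_2)_p$, say $v=(w_1)_p=(w_2)_p$ with $w_i\in W_i$. Then $r_p(w_1-w_2)=0$, and since $V$ is regular the map $r_p$ is injective, so $w_1=w_2\in W_1\cap W_2$ and $v\in(W_1\cap W_2)_p$. This is exactly where regularity is indispensable. Now if $Y\in\D_{W_1}\cap\D_{W_2}$, writing $Y=\sum_i f^iX_i$ with $X_i\in W_1$ and $Y=\sum_j g^jZ_j$ with $Z_j\in W_2$ shows that $Y_p\in(W_1)_p$ and $Y_p\in(W_2)_p$ for every $p$, hence $Y_p\in(W_1\cap W_2)_p$ by the identity above.

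It remains to globalize, i.e. to produce smooth coefficients. Since $V_p\subset T_pM$ and $V\cong V_p$, the space $V$ is finite dimensional, so $U:=W_1\cap W_2$ admits a finite basis $e_1,\dots,e_m$. By part (1) the subspace $U$ is regular, so $(e_1)_p,\dots,(e_m)_p$ are linearly independent in $T_pM$ for every $p$ and thus form a basis of $U_p$. Because $Y_p\in U_p$ for all $p$, there are unique real numbers $c^a(p)$ with $Y_p=\sum_{a=1}^m c^a(p)(e_a)_p$. The only point requiring care is the smoothness of the functions $c^a$. This I would establish locally: near any point the pointwise-independent family $(e_a)_p$ extends to a smooth local frame of $TM$, whose dual coframe supplies smooth $1$-forms $\theta^a$ with $\theta^a(e_b)=\delta^a_b$; then $c^a=\theta^a(Y)$ is smooth as a pairing of smooth objects. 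Consequently $Y=\sum_{a=1}^m c^ae_a\in\D_U=\D_{W_1\cap W_2}$, which finishes the reverse inclusion and the proof.

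The main obstacle is the last step, the smoothness of the coefficient functions $c^a$: the pointwise membership $Y_p\in(W_1\cap W_2)_p$ is not by itself enough to land in the module $\D_{W_1\cap W_2}$, and it is precisely the pointwise linear independence guaranteed by regularity (allowing completion to a smooth local frame) that upgrades the pointwise decomposition to one with $C^\infty$ coefficients. Everything else is routine monotonicity and elementary linear algebra.
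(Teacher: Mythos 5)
Your proof is correct. The paper offers no argument at all for this proposition --- it states that the definitions ``immediately imply'' it --- so there is nothing to diverge from: your write-up is the natural fleshing-out of that claim, with injectivity of the evaluation map $r_p\colon V\to T_pM$ giving both part (1) and the fiberwise identity $(W_1\cap W_2)_p=(W_1)_p\cap(W_2)_p$, and your local-frame/dual-coframe argument correctly settling the one genuinely non-immediate point, namely the smoothness of the coefficients $c^a$ that upgrades the pointwise membership $Y_p\in(W_1\cap W_2)_p$ to membership in the module $\D_{W_1\cap W_2}$.
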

Based on these properties we introduce the following definition:
\begin{definition}
Given a completely regular space $V\subset{\FX}(M)$ and a subset $S\subset\FX(M)$, we shall call the \emph{core of $S$} in $V$, denoted by $\core{S}$, the smallest subspace of
$V$ such that $S\subset \D_{\core{S}}$.
\end{definition}
That the previous definition makes sense and the core of any set
$S\subset \D_V$ exists, is contained in the following.
\begin{proposition} For a completely regular space $V\subset {\FX}(M)$ and any $S\subset\FX(M)$,
$$
\core{S}
=\bigcap_{W\in{\cal W}} \langle W\rangle,
\quad {\rm with}\quad {\cal W}= \{ W\subset V \mid S\subset \D_W\}\,.
$$
\end{proposition}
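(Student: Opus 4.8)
The plan is to set $C:=\bigcap_{W\in\mathcal W}\langle W\rangle$ and to show that $C$ is precisely the smallest subspace of $V$ whose generated distribution contains $S$, thereby simultaneously establishing that $\core{S}$ exists and coincides with $C$. First I would check that the family $\mathcal W$ is nonempty: since $V$ is completely regular we have $V_p=T_pM$ at every $p\in M$, hence $\D_V=\FX(M)\supset S$, so $V\in\mathcal W$. Next, because $\D_W=\D_{\langle W\rangle}$ for every $W$ (real linear combinations are a particular case of $C^\infty(M)$-linear ones), the condition $S\subset\D_W$ is equivalent to $S\subset\D_{\langle W\rangle}$; thus $\langle W\rangle\in\mathcal W$ whenever $W\in\mathcal W$, and $C$ is equally the intersection of all \emph{subspaces} $U\subset V$ with $S\subset\D_U$. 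In particular $C$ is a subspace of $V$ and, by construction, $C\subset U$ for every subspace $U$ with $S\subset\D_U$.

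The crux is to verify that $C$ itself belongs to $\mathcal W$, i.e. that $S\subset\D_C$; once this is done, $C$ is a subspace with $S\subset\D_C$ that is contained in every other such subspace, which is exactly the defining property of $\core{S}$. Here I would exploit that $V$ is finite dimensional (being isomorphic to $T_pM$), so the lattice of its subspaces satisfies the descending chain condition. Consequently the, possibly infinite, intersection $C$ collapses to a finite one: among all finite intersections $U_1\cap\cdots\cap U_m$ of members of $\mathcal W$ pick one of minimal dimension; for any $U\in\mathcal W$ the space $(U_1\cap\cdots\cap U_m)\cap U$ is again such a finite intersection, so minimality forces $(U_1\cap\cdots\cap U_m)\cap U=U_1\cap\cdots\cap U_m$, i.e. $U_1\cap\cdots\cap U_m\subset U$; hence $U_1\cap\cdots\cap U_m\subset C$, and together with $C\subset U_1\cap\cdots\cap U_m$ this gives $C=U_1\cap\cdots\cap U_m$.

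With $C$ expressed as a finite intersection I would conclude by induction using the earlier proposition stating that $\D_{W_1}\cap\D_{W_2}=\D_{W_1\cap W_2}$ for subspaces of a regular space. Since $V$ is regular, each $U_i$ and each partial intersection is again a subspace of a regular space (by part 1) of that proposition), so
$$
\D_{U_1}\cap\cdots\cap\D_{U_m}=\D_{U_1\cap\cdots\cap U_m}=\D_C.
$$
As $S\subset\D_{U_i}$ for every $i$, we obtain $S\subset\bigcap_i\D_{U_i}=\D_C$, which shows $C\in\mathcal W$ and completes the argument. The only delicate point is the reduction from an arbitrary intersection to a finite one; it is precisely there that complete regularity, guaranteeing $\dim V=\dim M<\infty$, is indispensable, since the identity $\D_{W_1\cap W_2}=\D_{W_1}\cap\D_{W_2}$ is available only for finitely many subspaces and could fail for infinite intersections of distributions without the regularity hypothesis.
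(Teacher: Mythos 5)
Your proof is correct and follows essentially the route the paper intends: the paper states this proposition without a written proof, as an immediate consequence of the preceding proposition ($\D_{W_1}\cap \D_{W_2}=\D_{W_1\cap W_2}$ for subspaces of a regular space) combined with the finite-dimensionality of $V$ guaranteed by complete regularity. Your write-up simply fills in the details left implicit there — nonemptiness of ${\cal W}$ (since $\D_V=\FX(M)$), the replacement of subsets $W$ by their spans $\langle W\rangle$, the descending-chain reduction of the arbitrary intersection to a finite one, and the inductive application of the pairwise intersection identity — all of which are the natural and correct steps.
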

Usually $V$ will be fixed once for all and this is the reason
why it does not appear in the notation.

Now, for a completely regular space $V\in {\FX}(M)$
and a dynamical vector field $\Gamma\in V$, we introduce
the following series:
$V_{\Gamma,0}=V$ and
$$
 V_{\Gamma,m}=\langle \Gamma\rangle + \core{[V_{\Gamma,m-1},V_{\Gamma,m-1}]} \,.
$$
Observe that $V_{\Gamma,m}\subset V_{\Gamma,m-1}$, which is easily shown by
induction. If there exists a $V_{\Gamma,k}$ for non negative $k$
which is the first Abelian
subspace in the series, we shall say that $(M,V,\Gamma)$ is
\emph{distributionally  integrable of order $k+1$}.
Then, we can state the main result of this section.
\begin{theorem}
If $(M,V,\Gamma)$ is distributionally integrable of order $r$,
then the  vector field $\Gamma$ can be integrated by $r$ quadratures.
\end{theorem}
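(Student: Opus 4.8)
The plan is to transport the iterative quadrature procedure of Section 4 to the distributional setting, the only genuinely new ingredient being the closedness of the relevant $1$-forms. Throughout I use that $V$ is completely regular, so that at every point $p$ the evaluation map $V\ni X\mapsto X(p)\in T_pM$ is a linear isomorphism. This lets me attach to each $\zeta\in V^*$ a smooth $1$-form $\alpha_\zeta$ on $M$ by setting $\alpha_\zeta(X)=\zeta(X)$ (a constant) for $X\in V$ and extending $C^\infty(M)$-linearly; in a basis $X_1,\dots,X_n$ of $V$ with dual coframe $\theta^1,\dots,\theta^n$ one has $\alpha_\zeta=\sum_i\zeta(X_i)\,\theta^i$, which is smooth because $V_p=T_pM$ makes the coframe smooth.

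The key step is to show that $\alpha_{\zeta_1}$ is closed for $\zeta_1\in V_{\Gamma,1}^0$. Since $\alpha_{\zeta_1}(X)$ is constant for $X\in V$, for $X,Y\in V$ the Cartan formula reduces, as in Section 4, to
$$
 d\alpha_{\zeta_1}(X,Y)=\alpha_{\zeta_1}([X,Y]).
$$
Here the core construction enters decisively: by definition $[V,V]\subset\D_{\core{[V,V]}}$, so every bracket can be written as $[X,Y]=\sum_i g^i W_i$ with $g^i\in C^\infty(M)$ and $W_i\in\core{[V,V]}$. Because $\alpha_{\zeta_1}$ is $C^\infty(M)$-linear and $\zeta_1$ annihilates $\core{[V,V]}\subset V_{\Gamma,1}$, I obtain $\alpha_{\zeta_1}([X,Y])=\sum_i g^i\,\zeta_1(W_i)=0$. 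Since $V_p=T_pM$, the vanishing of $d\alpha_{\zeta_1}$ on all pairs drawn from $V$ forces $d\alpha_{\zeta_1}=0$. This is precisely the point at which the distributional notion replaces the algebraic identity $\zeta_1([X,Y])=0$ of the Lie case: one no longer needs $[X,Y]\in V$, only that its module coefficients multiply a functional that already kills the core.

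With closedness established, Lemma \ref{l1}, applied on an arc-wise and simply connected open $U$ where closed forms are exact, yields by one quadrature, linearly in $\zeta_1$, functions $Q_{\zeta_1}$ solving $X_iQ_{\zeta_1}=\zeta_1(X_i)$. As $\zeta_1$ runs over $V_{\Gamma,1}^0$ the differentials $dQ_{\zeta_1}=\alpha_{\zeta_1}$ are pointwise independent, so their common level sets are $n_1$-dimensional submanifolds foliating $U$ whose tangent distribution is exactly $\D_{V_{\Gamma,1}}$; thus $\D_{V_{\Gamma,1}}$ is integrable with these leaves as integral manifolds, and involutivity requires no separate argument. Restricting to a leaf $N_1$, the space $V_{\Gamma,1}$ becomes completely regular and spans $TN_1$, and I repeat the construction with $V_{\Gamma,2}=\langle\Gamma\rangle+\core{[V_{\Gamma,1},V_{\Gamma,1}]}$, extending each $Q_{\zeta_2}$ from the leaves back to $U$ as in Section 4 by choosing reference points that vary smoothly from leaf to leaf. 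Iterating $k$ times reaches the first Abelian $V_{\Gamma,k}$, where $[V_{\Gamma,k},V_{\Gamma,k}]=0$ makes the integrability condition automatic for every $\zeta_k\in V_{\Gamma,k}^*$, so a final quadrature supplies the remaining functions. Assembling $\Xi=V_{\Gamma,1}^0\oplus\cdots\oplus V_{\Gamma,k}^0\oplus V_{\Gamma,k}^*$ of dimension $n$ and a basis $\{\xi_1,\dots,\xi_n\}$, the functions $Q_{\xi_j}$ are functionally independent coordinates with $\Gamma Q_{\xi_j}=\xi_j(\Gamma)$ constant, so $\Gamma$ is rectified and $\Phi_t$ is recovered by algebraic operations, for a total of $k+1=r$ quadratures.

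I expect the main obstacle to be not the closedness computation but the bookkeeping of the restriction-and-extension across leaves: one must verify that $V_{\Gamma,m}$ remains completely regular and tangent on each successive leaf, so that the evaluation isomorphism underlying the entire $\alpha_\zeta$ construction survives at every stage, and that the core of $[V_{\Gamma,m-1},V_{\Gamma,m-1}]$ taken in $V$ over $C^\infty(M)$ is compatible with the core computed on the leaf over $C^\infty(N_{m-1})$. These are the distributional analogues of the smoothness-of-extension points already met in Section 4, and I would dispatch them by the same device of smoothly varying reference points together with the observation that brackets of fields tangent to an integral manifold remain tangent.
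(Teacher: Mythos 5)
Your proposal is correct and follows essentially the same route as the paper: the paper's proof likewise transports the Section~4 algorithm verbatim, with the only new ingredient being the closedness of $\alpha_{\zeta_1}$ for $\zeta_1$ annihilating $V_{\Gamma,1}$ (which the paper asserts as immediate and you rightly justify via $[X,Y]\in\D_{\core{[V,V]}}$ and the $C^\infty(M)$-linearity of the $1$-form), followed by Lemma~\ref{l1}, restriction to the level sets of the $Q_{\zeta_1}$'s, and iteration down to the Abelian $V_{\Gamma,k}$ for a total of $k+1=r$ quadratures. Your explicit handling of the core in the closedness computation and your flagged leaf-wise regularity checks merely spell out details the paper leaves implicit, so no substantive difference in approach.
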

\begin{proof}
The procedure to obtain the solution of the system of differential
equations is the same as the one sketched in the first section.
For any $\zeta_1\in V^*$ annihilating $V_{\Gamma,1}$,
we introduce the 1-form $\alpha_{\zeta_1}$ such that
$\alpha_{\zeta_1}(X_i)=\zeta_1(X_i)$. Then, one immediately sees
that $d\alpha_{\zeta_1}(X_i,X_j)=0$ and, therefore, the 1-form is closed.
Then, by applying Lemma \ref{l1}, the system of partial differential equations
$$X_i Q_{\zeta_1} =\zeta_1(X_i),\quad i=1,\dots,n,\quad Q_{\zeta_1}\in C^\infty(U),$$
has a solution obtained with one quadrature.

By construction, vector fields in $V_{\Gamma,1}$ are tangent to the level set
of the $Q_{\zeta_1}$'s and, then, we can reduce the problem to that
submanifold.

Iterating the procedure we finally solve the problem by performing
$r$ quadratures that ends the proof.
\end{proof}

To illustrate this construction we will provide two examples
that represent, somehow, two opposite ends.

\vskip 3mm
\begin{example}
The first example was mentioned at the beginning of this section.
Consider $M=\CR^n$, $\Gamma=f(x)\partial_1$ with $f$ being a nowhere vanishing
smooth function,
and
$$V=\langle \Gamma,\partial_2,\dots,\partial_n\rangle\,.$$
Then, we immediately see that
$[\Gamma,\partial_i]\in  \D_{\langle \Gamma\rangle}$ for any $i$
and therefore $V_1=\langle \Gamma\rangle$,
so the system of equations is solved with $2$ quadratures.
\end{example}
\begin{example}
The second example is at the opposite end as it requires $n$ quadratures.
In this case we take
\begin{eqnarray}
\Gamma=f(x)\big(\hskip -.65cm&&
\partial_1+ g^2(x^1)\partial_{2}+\dots+
g^{n-1}(x^1,\dots,x^{n-2})\partial_{{n-1}}+
\cr&&+g^{n}(x^1,\dots,x^{n-1})\partial_{n} \big),
\end{eqnarray}
with $f(x)\not=0$ everywhere and
$$V=\langle \Gamma,\partial_{2},\dots,\partial_{n}\rangle.$$
It is immediate to show that
$V_{\Gamma,1}=\langle \Gamma,\partial_{3},\dots,\partial_{n}\rangle$,
$V_{\Gamma,2}=\langle \Gamma,\partial_{4},\dots,\partial_{n}\rangle$,
and finally $V_{\Gamma,n-1}=\langle \Gamma\rangle.$
This shows that the system is distributionally integrable
and requires $n$ quadratures for its solution.
\end{example}
Note that in the previous examples there is an arbitrary, nowhere vanishing
function $f$ that multiplies the dynamical vector field.
This is, actually, the general situation as it is
stated in the following proposition.
\begin{proposition}
Suppose that  $(M,V,\Gamma)$, with
$V=\langle \Gamma,X_2,\dots,X_n\rangle$,
is distributionally integrable of order $r$.
Then, for any nowhere-vanishing $f\in C^\infty(M)$, 
 the system $(M,V',f\Gamma)$ with
$V'=\langle f\Gamma,X_2,\dots,X_n\rangle$
is distributionally integrable of order $r'\in\{r-1, r, r+1\}$
\end{proposition}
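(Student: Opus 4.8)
The plan is to follow the two derived series $V_{\Gamma,m}$ and $V'_{f\Gamma,m}$ in parallel and to show that they differ only by the replacement of the line $\langle\Gamma\rangle$ with $\langle f\Gamma\rangle$. Write $U=\langle X_2,\dots,X_n\rangle$, so that $V=\langle\Gamma\rangle\oplus U$ and $V'=\langle f\Gamma\rangle\oplus U$. Since $V$ is completely regular, $\{\Gamma_p,X_{2,p},\dots,X_{n,p}\}$ is a basis of $T_pM$ at every $p$; in particular $\Gamma_p\neq 0$, so the lines $\langle\Gamma_p\rangle$ and $\langle f(p)\Gamma_p\rangle$ coincide pointwise. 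I will use throughout that, for a subspace $W$ of a completely regular space, $\D_W$ is exactly the module of sections of the subbundle $p\mapsto W_p$, that the evaluation $W\to W_p$ is injective (so that $\Gamma\in\D_W$ with $W\subseteq V$ forces $\Gamma\in W$), and that, by the intersection property $\D_{W_1}\cap\D_{W_2}=\D_{W_1\cap W_2}$ and the characterisation of the core as an intersection, $\core{S}$ depends only on the module $\D_S$.

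First I would record two facts about cores. (i) For any $S$ one has $\langle\Gamma\rangle+\core{S}=\core{S\cup\{\Gamma\}}$: the inclusion $\subseteq$ is immediate, and $\supseteq$ holds because the left-hand side is a subspace containing $\Gamma$ whose module contains $\D_S$. (ii) The core comparison: if $R$ is a module with $\D_{\langle\Gamma\rangle}\subseteq R$, then its core in $V$ has the form $\langle\Gamma\rangle\oplus A$ with $A\subseteq U$ (since $\D_{\langle\Gamma\rangle}\subseteq R$ forces $\Gamma$ into the core), and its core in $V'$ equals $\langle f\Gamma\rangle\oplus A$ for the same $A$. Indeed $\langle\Gamma\rangle\oplus A$ and $\langle f\Gamma\rangle\oplus A$ have the same underlying subbundle, because $\langle\Gamma_p\rangle=\langle f(p)\Gamma_p\rangle$; hence $\langle f\Gamma\rangle\oplus A$ has module containing $R$ and therefore contains the $V'$-core, and intersecting with $U$ gives $A'\subseteq A$, while the symmetric argument (interchanging the roles of $V,V'$ through the nowhere-vanishing factor $1/f$) gives $A\subseteq A'$.

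With (i)--(ii) in hand I would prove, by simultaneous induction on $m$, the structural identity
$$V_{\Gamma,m}=\langle\Gamma\rangle\oplus U_m,\qquad V'_{f\Gamma,m}=\langle f\Gamma\rangle\oplus U_m,\qquad U_m\subseteq U,$$
with one and the same $U_m=V_{\Gamma,m}\cap U=V'_{f\Gamma,m}\cap U$ for both series. Using (i), the step $m\to m+1$ reduces to comparing the cores of $R_{m+1}=\D_{[V_{\Gamma,m},V_{\Gamma,m}]}+\D_{\langle\Gamma\rangle}$ and of its primed analogue. These two modules are in fact equal: the internal brackets $[w,w']$ with $w,w'\in U_m$ are common to both, whereas $[f\Gamma,w]=f[\Gamma,w]-w(f)\,\Gamma$ differs from $[\Gamma,w]$ only by the unit factor $f$ and by a multiple of $\Gamma$, both of which are absorbed once $\D_{\langle\Gamma\rangle}$ is adjoined. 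Lemma (ii) then delivers a common $U_{m+1}$, closing the induction.

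Finally I would compare the stopping times. If the unprimed series is first Abelian at step $k$ (so $r=k+1$), then $[V_{\Gamma,k},V_{\Gamma,k}]=0$, hence $V_{\Gamma,k+1}=\langle\Gamma\rangle$ and $U_{k+1}=0$; but then $V'_{f\Gamma,k+1}=\langle f\Gamma\rangle$ is Abelian, so the primed series stops no later than $k+1$, i.e. $k'\le k+1$. Running the identical argument in reverse, via $\Gamma=\tfrac1f(f\Gamma)$ with nowhere-vanishing factor $1/f$, gives $k\le k'+1$, whence $|k-k'|\le 1$ and $r'=k'+1\in\{r-1,r,r+1\}$. The main obstacle is Lemma (ii): one must compare cores computed in two genuinely different completely regular ambient spaces, and the crux is that adjoining $\langle\Gamma\rangle$ makes the $U$-components of the two cores agree. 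The reason the order can truly shift by $\pm1$ is visible in the same place: the common sequence $U_m$ sees $[\Gamma,w]$ only modulo $\langle\Gamma\rangle$, so $V_{\Gamma,k}=\langle\Gamma\rangle\oplus U_k$ being Abelian (which requires $[\Gamma,w]=0$) need not force $V'_{f\Gamma,k}=\langle f\Gamma\rangle\oplus U_k$ to be Abelian (which requires $f[\Gamma,w]=w(f)\,\Gamma$), the discrepancy being exactly the terms $w(f)\,\Gamma$.
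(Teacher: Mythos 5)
Your proof is correct and follows essentially the same route as the paper: the paper's argument is precisely your structural identity (stated there as ``if $\{\Gamma,Y_2,\dots,Y_l\}$ is a basis of $V_{\Gamma,m}$, then $\{f\Gamma,Y_2,\dots,Y_l\}$ is a basis of $V_{f\Gamma,m}$'', left as an easy induction), followed by the same symmetric stopping-time comparison via $1/f$ to get $r-1\le r'\le r+1$. You have merely supplied the details the paper omits --- the core lemmas, the equality of the modules $R_{m+1}=R'_{m+1}$ using $[f\Gamma,w]=f[\Gamma,w]-w(f)\,\Gamma$, and the pointwise coincidence of the subbundles --- all of which check out.
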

\begin{proof} By induction, it is easy to see that if $\{\Gamma,Y_2,\dots,Y_l\}\subset V_{\Gamma,m}$
is a basis of $V_{\Gamma,m}$, then $\{f\Gamma,Y_2,\dots,Y_l\}$ forms a basis
of $V_{f\Gamma,m}$. Therefore, if $V_{\Gamma,r-1}$ is Abelian, then $V_{f\Gamma,r}$ is Abelian too,
i.e.  $r'\leq r+1$. But the relation is obviously symmetric, therefore
we must have $r\leq r'+1$, from which we get $r-1\leq r'\leq r+1$.
\end{proof}

We can also translate the properties of the previous section
to this generalized setup.
We shall call a completely regular $V$ \emph{distributionally solvable} if
the series $V_{(i)}
=\core{[V_{(i-1)},V_{(i-1)}]}$, with $V_{(0)}=V$, stabilizes trivially, $V_{(n)}=\{ 0\}$.
Denote with $\D_{(s)}$ the distribution $\D_{V_{(s)}}$. It is clear that $\D_{(s)}\subset \D_{(s')}$ for $s>s'$ and $[\D_{(s)},\D_{(s)}]\subset \D_{(s)}$, so that these distributions are involutive, hence integrable.

We will say analogously that a completely regular $V$ is
\emph{distributionally nilpotent} if
the generalized central series:
$$V^{(i)}=\core{[V^{(i-1)},V]},\quad {\rm with}\quad V^{(0)}=V,$$
stabilizes at $\{0\}$.
\begin{example}\label{e1}
Consider in $\FX(\mathbb R^n)$ the vector subspace $V$ spanned by a basis $X_i$, $i=1,\dots,n$, of vector fields of the following `triangular form':
$$X_i=\partial_i+\sum_{k>i}f_i^k(x)\partial_k\,.$$
Then, $V$ is distributionally solvable. Indeed, the vector fields in $[V,V]$ have no $\partial_1$ components, so {\janusz  $V_{(1)}$} is spanned by $X_2,\dots,X_n$. Inductively, {\janusz  $V_{(n-1)}$} is spanned by
$X_n=\partial_n$ and {\janusz $V_{(n)}=\{ 0\}$}.
\end{example}

\begin{example}\label{e2}
Consider in $\FX(\mathbb R^n)$ the vector subspace $V$ spanned by a basis $X_i$, $i=1,\dots,n$, of vector fields of the following `strong triangular form':
$$X_i=\partial_i+\sum_{k>i}f_i^k(x^1,\dots,x^{k-1})\partial_k\,,$$
where the coefficients $f_i^k$ depend on variables $x^{1},\dots,x^{k-1}$ only.
Then, $V$ is distributionally nilpotent.
Indeed, as before {\janusz $V^{(1)}$} is spanned by $X_2,\dots,X_n$ and, inductively, {\janusz $[V,V^{(s)}]$} is spanned by $X_{s+1},\dots,X_n$, so  {\janusz $V^{(n)}=\{ 0\}$}.
\end{example}

The proof of the following proposition is completely analogous to the proofs of Proposition \ref{p4} and Proposition \ref{p5}.
\begin{proposition}\
\begin{description}
\item{(a)} If $(M,V,\Gamma)$ is distributionally integrable,
then $V$ is distributionally solvable.
\item{(b)} If $V$ is distributionally nilpotent, then
$(M,V,\Gamma)$ is distributionally integrable for any $\Gamma\in V$.
\end{description}
\end{proposition}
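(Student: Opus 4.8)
The plan is to transcribe the proofs of Proposition \ref{p4} and of the nilpotency proposition into the distributional language, the only genuinely new ingredient being the behaviour of the core operation $\core{\cdot}$ under inclusions. First I would record the one auxiliary fact that drives everything: the core is monotone, i.e. if $S\subset S'\subset\FX(M)$ then $\core{S}\subset\core{S'}$. This is immediate from the characterization $\core{S}=\bigcap_{W\in{\cal W}}\langle W\rangle$ with ${\cal W}=\{W\subset V\mid S\subset\D_W\}$, since enlarging $S$ can only shrink the family ${\cal W}$ over which we intersect. I would also note the two evident inclusions $V^{(i)}\subset V$ and $\core{[V_{\Gamma,i},V_{\Gamma,i}]}\subset V_{\Gamma,i+1}$ that follow directly from the definitions.

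For part (a), I would introduce the distributional derived series $V_{(i)}=\core{[V_{(i-1)},V_{(i-1)}]}$, $V_{(0)}=V$, and prove by induction that $V_{(i)}\subset V_{\Gamma,i}$ for all $i$. The base case $V_{(0)}=V=V_{\Gamma,0}$ is trivial; for the step, from $V_{(i)}\subset V_{\Gamma,i}$ one gets $[V_{(i)},V_{(i)}]\subset[V_{\Gamma,i},V_{\Gamma,i}]$ as subsets of $\FX(M)$, hence $V_{(i+1)}=\core{[V_{(i)},V_{(i)}]}\subset\core{[V_{\Gamma,i},V_{\Gamma,i}]}\subset V_{\Gamma,i+1}$ by monotonicity of the core and the definition of $V_{\Gamma,i+1}$. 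If $(M,V,\Gamma)$ is distributionally integrable, $V_{\Gamma,k}$ is Abelian for some $k$, so $[V_{\Gamma,k},V_{\Gamma,k}]=\{0\}$ and therefore $V_{(k+1)}\subset\core{[V_{\Gamma,k},V_{\Gamma,k}]}=\{0\}$; thus the series $V_{(i)}$ stabilizes trivially and $V$ is distributionally solvable.

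For part (b), I would mimic the nilpotent argument, using the distributional central series $V^{(i)}=\core{[V^{(i-1)},V]}$, $V^{(0)}=V$, and prove by induction that $V_{\Gamma,i}\subset\langle\Gamma\rangle+V^{(i)}$. The base case holds since $\langle\Gamma\rangle+V^{(0)}=V=V_{\Gamma,0}$ (as $\Gamma\in V$). For the step, writing any $X,Y\in V_{\Gamma,i}$ as $X=a\Gamma+X'$, $Y=b\Gamma+Y'$ with $a,b\in\CR$ and $X',Y'\in V^{(i)}$, and using $[\Gamma,\Gamma]=0$, one finds $[X,Y]=a[\Gamma,Y']-b[\Gamma,X']+[X',Y']$; since $\Gamma\in V$ and $V^{(i)}\subset V$, each term lies in $[V,V^{(i)}]\subset\D_{V^{(i+1)}}$. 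Hence $[V_{\Gamma,i},V_{\Gamma,i}]\subset\D_{V^{(i+1)}}$, so $\core{[V_{\Gamma,i},V_{\Gamma,i}]}\subset V^{(i+1)}$ and $V_{\Gamma,i+1}=\langle\Gamma\rangle+\core{[V_{\Gamma,i},V_{\Gamma,i}]}\subset\langle\Gamma\rangle+V^{(i+1)}$, closing the induction. If $V$ is distributionally nilpotent, $V^{(k)}=\{0\}$ for some $k$, whence $V_{\Gamma,k}\subset\langle\Gamma\rangle$; since $\Gamma\in V_{\Gamma,k}$ by construction, $V_{\Gamma,k}=\langle\Gamma\rangle$ is Abelian and $(M,V,\Gamma)$ is distributionally integrable.

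The genuinely delicate point, and the only place where this differs from the purely Lie-algebraic Propositions \ref{p4} and \ref{p5}, is the interplay between the real-linear bracket and the $C^\infty(M)$-module closure hidden in $\core{\cdot}$: one must be sure that $\core{\cdot}$ is both monotone and compatible with the identity $[V^{(i)},V]\subset\D_{V^{(i+1)}}$, so that set-theoretic inclusions of brackets translate into inclusions of the associated subspaces of $V$. Once this bookkeeping is in place the two inductions run exactly as in the Lie case, and I expect no further obstacle.
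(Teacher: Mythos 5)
Your proposal is correct and follows essentially the paper's own approach: the paper gives no separate proof, stating only that it is ``completely analogous to the proofs of Proposition \ref{p4} and Proposition \ref{p5}'' (more precisely, for part (b), to the nilpotency proposition), and your argument is exactly that transcription. The one ingredient you make explicit --- monotonicity of $\core{\cdot}$, immediate from $\core{S}=\bigcap_{W\in{\cal W}}\langle W\rangle$, together with $[V,V^{(i)}]\subset\D_{V^{(i+1)}}$ --- is precisely what the paper leaves implicit, and your inductions $V_{(i)}\subset V_{\Gamma,i}$ and $V_{\Gamma,i}\subset\langle\Gamma\rangle+V^{(i)}$ carry through without gaps.
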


We end up with presenting  descriptions of distributionally solvable or nilpotent  $V\subset\FX(M)$ showing that the Examples \ref{e1} and \ref{e2} are in a sense universal {\janusz  (compare with \cite{MK88,JG90})}.

{\janusz  Let $V\subset\FX(M)$ be distributionally solvable, $r$ be the smallest natural number such that
$V_{(r)}=\{ 0\}$, and $d_s$, $s=1,\dots, r$, be the dimension of the space $V_{(s-1)}/V_{(s)}$. } Put
$w_s=d_1+\cdots +d_{s}$, for $s\geq 1$, {\janusz  to be the dimension of $V/V_{(s)}$}.
\begin{theorem} For any $p\in M$, there is a basis $\{X_1,\dots,X_n\}$ of $V$  and there exist local coordinates {\janusz $(x^u)$} around $p$ such that, for $w_{s-1}<i\le w_s $, $s=1,\dots, r$, {\janusz  the vector field $X_i$ is of the form}
$$X_i=\partial_i+\sum_{k>w_s}f_i^k(x)\partial_k\,.$$
\end{theorem}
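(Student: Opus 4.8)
The plan is to translate the statement into a simultaneous rectification problem for a flag of distributions and then to normalise a basis block by block. First I would record the structural facts: since $V$ is completely regular and each $V_{(s)}$ is a subspace, the fact that subspaces of a regular space are regular shows every $V_{(s)}$ is regular, so $\D_{(s)}:=\D_{V_{(s)}}$ has constant rank $\dim V_{(s)}=n-w_s$; moreover the $\D_{(s)}$ are involutive, as already noted in the text, and they form a strictly decreasing flag $\D_{(0)}=TM\supset\D_{(1)}\supset\cdots\supset\D_{(r)}=\{0\}$ with $w_0=0$, $w_r=n$. The asserted normal form is equivalent to producing coordinates that \emph{simultaneously} straighten this flag, i.e. $\D_{(s)}=\langle\partial_{w_s+1},\dots,\partial_n\rangle$ for all $s$, together with a flag-adapted basis $\{X_i\}$ of $V$ whose block pieces reduce, modulo $\D_{(s)}$, to the coordinate fields $\partial_i$. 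So I would proceed in two stages: (i) rectify the flag; (ii) fix the basis.

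For stage (i) I would establish the simultaneous Frobenius theorem for a nested family of involutive constant-rank distributions by induction on $r$: apply the ordinary Frobenius theorem to $\D_{(1)}$ to obtain $\D_{(1)}=\langle\partial_{w_1+1},\dots,\partial_n\rangle$, pass to the local leaf space $P=M/\mathcal F_{(1)}$ of dimension $w_1$, observe that each $\D_{(s)}\supset\D_{(1)}$ with $s\ge 1$ descends to an involutive distribution on $P$, and invoke the induction hypothesis there; pulling the resulting coordinates back and completing them along the leaves yields a chart in which all the $\D_{(s)}$ are coordinate distributions. (This step is classical and could alternatively be cited.)

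Stage (ii) begins by fixing a real basis $\{X_1,\dots,X_n\}$ of $V$ adapted to the flag, so that $\{X_i:i>w_s\}$ is a basis of $V_{(s)}$ for each $s$. The key algebraic observation, which I would prove from regularity, is that $V_{(s-1)}\cap\D_{(s)}=V_{(s)}$: if $Y\in V_{(s-1)}$ and $Y_p\in(\D_{(s)})_p=(V_{(s)})_p$ for all $p$, then choosing $Y'\in V_{(s)}$ with $Y'_p=Y_p$ and using injectivity of evaluation on the regular space $V_{(s-1)}$ forces $Y=Y'\in V_{(s)}$. Consequently the block vectors $X_i$, $w_{s-1}<i\le w_s$, are pointwise linearly independent modulo $\D_{(s)}$ and form a frame of the quotient module $\D_{(s-1)}/\D_{(s)}$. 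Because $V_{(s)}=\core{[V_{(s-1)},V_{(s-1)}]}$, every bracket $[X_i,X_j]$ with $i,j$ in block $s$ lies in $\D_{(s)}$; and for $X\in V_{(s-1)}$, $Z\in\D_{(s)}$ one checks $[X,Z]\in\D_{(s)}$, so every element of $V_{(s-1)}$ is projectable along $\mathcal F_{(s)}$. Hence the block vectors descend to $d_s$ commuting, pointwise-independent vector fields $\hat X_i$ on the leaf space $M/\mathcal F_{(s)}$.

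Finally I would straighten the blocks in increasing order $s=1,\dots,r$. For block $s$, the commuting-frame theorem applied to the $\hat X_i$ on $M/\mathcal F_{(s)}$ provides new first integrals $x^{w_{s-1}+1},\dots,x^{w_s}$ of $\D_{(s)}$ with $\hat X_i(x^k)=\delta_i^k$; realised upstairs this amounts to a change of the block-$s$ coordinates alone, each a function of $x^1,\dots,x^{w_s}$. I would then verify two things: that such a change preserves the whole flag (the defining differentials $dx^1,\dots,dx^{w_{s'}}$ of $\D_{(s')}$ are unaffected for $s'<s$, while for $s'\ge s$ the higher coordinates are untouched), and that it does not spoil the normal form already achieved for lower blocks, since the terms it introduces into a lower-block field $X_i$ carry only indices $>w_{s'}$ and hence fall into the admissible tail $\sum_{k>w_{s'}}f_i^k\partial_k$. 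After the $r$ steps one has $X_i(x^k)=0$ for $k\le w_{s-1}$ (as $X_i\in\D_{(s-1)}$) and $X_i(x^k)=\delta_i^k$ for $w_{s-1}<k\le w_s$, whence $X_i=\partial_i+\sum_{k>w_s}f_i^k\partial_k$, as required. The main obstacle is exactly the mutual compatibility of the successive block straightenings and their smooth dependence on the transverse variables; I expect this to be controlled entirely by the triangular structure just described, each normalisation touching only the current block's coordinates and pushing corrections into strictly higher indices, so that no earlier step is undone.
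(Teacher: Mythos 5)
Your proposal is correct in substance but follows a genuinely different route from the paper. The paper never invokes Frobenius rectification or a commuting-frame theorem: it stays entirely within its quadrature machinery. It chooses the same flag-adapted basis, but then builds the coordinates directly, block by block, as path integrals: for block $s$ it takes the $1$-forms $\alpha^j$ dual to the block basis with $\alpha^j(V_{(s)})=0$, observes that they are (leaf-wise) closed precisely because brackets of elements of $V_{(s-1)}$ lie in $\D_{(s)}$ --- the same fact you use to make your block fields commute after projection --- and defines $x^j(q)=\int_{\gamma_q}\alpha^j$ along paths inside the leaves of $\D_{(s-1)}$, with reference points $p(a)$ varying smoothly along a transversal $N_1$. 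The relations $X_i(x^j)=\delta^j_i$ within each block and $V_{(s)}(x^j)=0$ then give a block-triangular matrix $\bigl(X_i(x^j)\bigr)$ with identity diagonal blocks, which yields the coordinate property and the normal form simultaneously, with no successive corrections needed. Your two-stage scheme (simultaneous rectification of the flag, then straightening projectable commuting block frames via block-triangular coordinate changes) is more modular and leans on classical theorems whose proofs use ODE flows; the paper's argument is more self-contained and constructive-by-quadratures, in keeping with the theme of the paper, and shows in passing that the adapted coordinates themselves are obtained by quadratures. Your lemma $V_{(s-1)}\cap\D_{(s)}=V_{(s)}$, proved from regularity, is a nice explicit record of what the paper uses implicitly when it selects $\alpha^j$ with $\alpha^j(X_i)=\delta^j_i$ and $\alpha^j(V_{(s)})=0$.

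One local error you should repair in stage (i): the flag $\D_{(0)}\supset\D_{(1)}\supset\cdots$ is decreasing, so for $s\ge 1$ you have $\D_{(s)}\subset\D_{(1)}$, not the reverse; these distributions are tangent to the leaves of the foliation defined by $\D_{(1)}$ and hence project to \emph{zero} on the local leaf space $P$, so they do not descend to $P$ and the induction as written collapses. Fix it either by restricting $\D_{(s)}$, $s\ge 2$, to the leaves of $\D_{(1)}$ and applying Frobenius with smooth dependence on the transverse parameters, or by quotienting instead by the \emph{smallest} nonzero member $\D_{(r-1)}$, to whose local leaf space the larger members genuinely descend (projectability follows from $[\D_{(r-1)},\D_{(s)}]\subset[\D_{(s)},\D_{(s)}]\subset\D_{(s)}$), and inducting there. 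Since simultaneous rectification of a nested family of involutive constant-rank distributions is classical and you offered to cite it, this is a repairable slip rather than a structural gap; the rest of your argument --- regularity giving constant rank, pointwise independence of the block fields modulo $\D_{(s)}$, projectability of $V_{(s-1)}$ along the leaves of $\D_{(s)}$, and the triangular bookkeeping showing that each block normalisation pushes corrections into strictly higher indices and so does not undo earlier steps --- checks out.
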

\begin{proof}
Choose $X_1,\dots,X_{d_1}$ representing a basis of {\janusz $V/V_{(1)}$}. Let $\alpha^1,\dots,\alpha^{d_1}$ be 1-forms on $M$ such that $\alpha^j(X_i)=\delta^j_i$ and {\janusz $\alpha^j(V_{(1)})=0$}, for $i,j=1,\dots,w_1$. Exactly as we have seen earlier, the 1-forms are closed and they have potentials $x^i$ around $p$ of the form
$$
  x^i(q)=\int_{\gamma_q}\alpha^i,
$$
where $\gamma_q$ is a smooth path joining $p$ with $q\in M$.

Now we can choose {\janusz $X_{d_1+1},\dots,X_{w_2}\in V_{(1)}$} representing a basis of {\janusz $V_{(1)}/V_{(2)}$}. As the distribution generated by {\janusz $V_{(1)}$} is clearly involutive, it defines a foliation by the level sets $M(a)$ of $(x^1,\dots,x^{d_1})$, so we can choose a smooth submanifold $N_1$ of dimension $d_1$ through $p$ intersecting the local leaf $M(a)$  of {\janusz $V_{(1)}$}, where $a\in\mathbb R^{d_1}$ is sufficiently close to 0, at a single point $p(a)$.

Inductively, for any such $a$, we have on $M(a)$ the 1-forms $\alpha^{d_1+1}_a,\cdots, \alpha^{w_2}_a$ such that $\alpha^j(X_i)=\delta^j_i$ and {\janusz $\alpha^j(V_{(2)})=0$}, for $i,j=d_1+1,\dots,w_2$. The 1-forms are, this time leaf-wise, closed and they have potentials $x^i$ around $p$, $i=d_1+1,\cdots, w_2$, of the form
\begin{equation}\label{solution2}
  x^i(q)=\int_{\gamma_q(a)}\alpha^i,
\end{equation}
where $\gamma_q(a)$ is a smooth path lying entirely  inside the leaf $M(a)$ and joining $q\in M(a)$ with the point $p(a)$. The functions $x^i$, with $i=d_1+1,\cdots, w_2$, are globally defined and smooth, and they are, this time only leaf-wise, potentials for $\alpha^i$. In any case,
$X_i(x^j)=\delta^j_i$ and {\janusz $V_{(2)}(x^j)=0$}, for $i,j=d_1+1,\dots,w_2$.

Proceeding in this way inductively, we prove the theorem.
\end{proof}

If we assume that $V\subset\FX(M)$ is distributionally nilpotent and define $R$, $D_s$, $W_s$
like $r$, $d_s$, and $w_s$ above, but using the sequence of subspaces {\janusz $V^{(s)}$ instead of $V_{(s)}$}, we get a basis $(X_i)$ of $V$ and local coordinates $x^i$ as before, with
$$X_i=\partial_i+\sum_{k>W_s}f_i^k(x)\partial_k\,,$$
for any $i$ satisfying $W_{s-1}<i\le W_s $, $s=1,\dots, R$.
This time, however, {\janusz $[X_i,V^{(s')}]$} lies in the distribution spanned by {\janusz $V^{(s'+1)}$}.
This means, for $s'>s$, that $\partial_j(f_i^k)=0$ for $k\le W_{s'}$ {\janusz  and $j>W_{s'-1}$}, thus we get a stronger triangular form for elements of $V$ as follows.
\begin{theorem}{\janusz  Let $V\subset\FX(M)$ be distributionally nilpotent, $R$ be the smallest natural number such that
$V^{(R)}=\{ 0\}$, and
$W_s$ be the dimension of $V/V^{(s)}$. Then,
for any $p\in M$, there is a basis $X_1,\dots,X_n$ of $V$  and local coordinates $(x^u)$ around $p$ such that, for $W_{s-1}<i\le W_s $, $s=1,\dots, R$, {\janusz  the vector field $X_i$ is of the form}
$$X_i=\partial_i+\sum_{k>W_s}f_i^k(x)\partial_k\,,$$
where all coefficients $f_i^k$, with $k\le W_{s'+1}$, $s\le s'$, depend on variables $x^1,\dots, x^{W_{s'}}$ only.}
\end{theorem}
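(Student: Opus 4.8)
The plan is to obtain the basis and coordinates from the preceding theorem, applied to the central series $V^{(s)}$ in place of the derived series $V_{(s)}$, and then to squeeze the extra dependence of the coefficients out of the defining property of the central series. First I would check that the construction of the preceding theorem is actually available here. Since the core is monotone ($S\subset S'$ forces $\core S\subset\core{S'}$, because $\core{S'}$ already satisfies $S\subset S'\subset\D_{\core{S'}}$), the central series is descending, $V^{(s+1)}=\core{[V^{(s)},V]}\subset\core{[V^{(s-1)},V]}=V^{(s)}$; moreover $[V^{(s)},V^{(s)}]\subset[V^{(s)},V]\subset\D_{V^{(s+1)}}\subset\D_{V^{(s)}}$, so each distribution $\D_{V^{(s)}}$ is involutive. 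Thus the hypotheses match those of the preceding theorem with $V_{(s)}$ replaced by $V^{(s)}$, and its proof yields verbatim a basis $\{X_1,\dots,X_n\}$ of $V$ and coordinates $(x^u)$ around $p$ with $X_i=\partial_i+\sum_{k>W_s}f_i^k\partial_k$ whenever $W_{s-1}<i\le W_s$. In these coordinates $V^{(s)}$ is spanned by $X_{W_s+1},\dots,X_n$, whose leading terms are $\partial_{W_s+1},\dots,\partial_n$, so $\D_{V^{(s)}}$ is precisely the module generated by $\{\partial_k : k>W_s\}$.

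The only genuinely new input is the nilpotency relation $[V,V^{(s')}]\subset\D_{V^{(s'+1)}}$, which is immediate from $V^{(s'+1)}=\core{[V^{(s')},V]}$. I would evaluate it on basis fields. For $X_j$ at level $t$ (that is, $W_{t-1}<j\le W_t$) we have $X_j\in V^{(t-1)}$, hence $[X_i,X_j]\in\D_{V^{(t)}}$, so $[X_i,X_j]$ has no $\partial_\gamma$ component for $\gamma\le W_t$. Expanding the bracket with $X_i=\partial_i+\sum_{a>W_s}f_i^a\partial_a$ and $X_j=\partial_j+\sum_{b>W_t}f_j^b\partial_b$, all contributions to such a component coming from the leading $\partial_i,\partial_j$ and from the constant entries cancel, leaving
$$X_j(f_i^\gamma)=\partial_j f_i^\gamma+\sum_{b>W_t}f_j^b\,\partial_b f_i^\gamma=0,\qquad W_s<\gamma\le W_t.$$

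The hard part is that this gives only the directional derivative $X_j(f_i^\gamma)$ and not $\partial_j f_i^\gamma$ itself, so the correction sum has to be removed. I would do this by downward induction on the level $t$ of $X_j$, with $i$ fixed. At the deepest level $t=R$ one has $W_R=n$, the sum is empty and $X_j=\partial_j$, so the relation reads $\partial_j f_i^\gamma=0$ directly. For $t<R$, each index $b$ in the sum satisfies $b>W_t$, hence $X_b$ sits at some level $t'>t$ already treated; since $\gamma\le W_t\le W_{t'}$ the inductive hypothesis gives $\partial_b f_i^\gamma=0$, the sum drops out, and again $\partial_j f_i^\gamma=0$. Collecting these, $\partial_j f_i^k=0$ whenever the level of $j$ is at least the level $\ell$ of $k$; equivalently $f_i^k$ depends only on $x^1,\dots,x^{W_{\ell-1}}$. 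Writing $s'=\ell-1$ (so that $k\le W_{s'+1}$, and, because $k>W_s$, automatically $s\le s'$) this is exactly the stated dependence of $f_i^k$ on $x^1,\dots,x^{W_{s'}}$ only, which completes the proof.
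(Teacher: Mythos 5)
Your proposal is correct and follows essentially the same route as the paper, which likewise derives the result by running the construction of the preceding theorem with the central series $V^{(s)}$ in place of the derived series $V_{(s)}$ and then invoking $[X_i,V^{(s')}]\subset\D_{V^{(s'+1)}}$ to constrain the coefficients. Your only addition is to make explicit, via the downward induction on levels, the passage from $X_j(f_i^k)=0$ to $\partial_j(f_i^k)=0$, a step the paper asserts without detail.
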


\section*{Acknowledgments}

This work was partially  supported by the research projects MTM--2012--33575, FPA--2012--35453 (MINECO) and DGA E24/1, E24/2 (DGA, Zaragoza). Research of
 J.~Grabowski was founded by the Polish National Science Centre grant HARMONIA under the contract number DEC-2012/04/M/ST1/00523.


\end{document}